\newtheorem{theorem}{Theorem}[section]
\newtheorem*{theorem*}{Theorem}
\newtheorem{lemma}[theorem]{Lemma}
\newtheorem{remark}{Remark}
\newtheorem{claim}{Claim}
\newtheorem*{conjecture*}{Conjecture}
\author{Arpitha P. Bharathi
  \and Minati De\thanks{Supported by DST-INSPIRE Faculty Grant DST-IFA14-ENG-75.}
  \and Abhiruk Lahiri}
\title{Circular Separation Dimension of a Subclass of Planar Graphs}
\affiliation{Department of Computer Science and Automation, Indian Institute of Science, Bangalore, India}
\keywords{Circular separation dimension, planar graph, 2-outerplanar graph, series-parallel graph}
\begin{document}
\publicationdetails{19}{2017}{3}{8}{2661}
\maketitle
\begin{abstract}
A pair of non-adjacent edges is said to be \emph{separated} in a circular ordering of vertices, if the endpoints of the two edges do not alternate in the ordering. The \emph{circular separation dimension} of a graph $G$, denoted by $\pi^\circ(G)$, is the minimum number of circular orderings of the vertices of $G$ such that every pair of non-adjacent edges is separated in at least one of the circular orderings. This notion is introduced by Loeb and West in their recent paper. In this article, we consider two subclasses of planar graphs, namely $2$-outerplanar graphs and series-parallel graphs. A $2$-outerplanar graph has a planar embedding such that the subgraph obtained by removal of the vertices of the exterior face is outerplanar. We prove that if $G$ is $2$-outerplanar then $\pi^\circ(G) = 2$. We also prove that if $G$ is a series-parallel graph then $\pi^\circ(G) \leq 2$.
\end{abstract}

\section{Introduction}
\noindent
\cite{BasavarajuCGMR16} introduced the notion of separation dimension of a graph. \emph{Separation dimension} $\pi(G)$ of a graph $G=(V,E)$, denotes the minimum number of linear orderings of the vertices in $V$ required to ``separate'' all non-adjacent edges in $E$. Here, a pair $(e,f)$ of edges is separated in an ordering if both vertices of $e$ appear after/before both vertices of $f$. 

Very recently, \cite{LoebWest16} introduced a similar terminology circular separation dimension. The \emph{circular separation dimension} of a graph $G=(V,E)$, denoted by $\pi^\circ(G)$, is the minimum number of circular orderings of $V$ needed to separate all non-adjacent edges in $E$. Here, a pair  of edges is separated in an ordering    if the  endpoints of the two edges do not alternate in that ordering. As a pair of edges separated  in a linear ordering are also separated  in  the  corresponding circular ordering, we know that $\pi^\circ(G)\leq \pi(G)$. Another variant of the dimensional problem that has been recently studied is the \textit{induced separation dimension} by \cite{Ziedan}.

Addition of edges in a graph does not  decrease both the parameters $\pi$ and $\pi^\circ$; this property is referred to as \textit{monotonicity}. Thus, for a graph with $n$ vertices, these parameters achieve the maximum value when the graph is a complete graph $K_n$. \cite{BasavarajuCGMR16} showed that  in general for any graph with $n$ vertices, $ \log_2(\lfloor\frac{1}{2} \omega(G)\rfloor) \leq \pi(G)\leq 4 \log_{\frac{3}{2}} n$. \cite{LoebWest16} proved that  $\pi^\circ(G)> \log_2 \log_3(\omega(G)-1)$. Here $\omega(G)=\max\{t|K_t\subseteq G\}$.   

The ranges of both the parameters were studied for  special  graph classes as well. \cite{AlonBCMR15} showed that  $\pi(G)\leq  2^{9{log^{\star}}\!d} d$ for a graph $G$ whose degree is bounded by   $d$. They also proved that for almost all $d$ regular graphs,  $\pi(G)$ is at least $\lceil d/2 \rceil$. For a complete bipartite graph $K_{m,n}$, $\pi(K_{m,n})\geq \log_2\min\{m,n\}$~\cite{BasavarajuCGMR16}, and $\pi^\circ(K_{m,n})=2$~\cite{LoebWest16}.

If $G$ is a planar graph, $\pi(G) \leq 3$  \cite{BasavarajuCGMR16}. Clearly from the previous observation \textit{i.e.} $\pi^\circ(G)\leq \pi(G)$, we have that $\pi^\circ(G) \leq 3$ when $G$ is planar.  On the other hand, it is  interesting to note that $\pi^\circ(G)=1$ if  and only if $G$ is outerplanar. This result is again by \cite{LoebWest16}. 
Thus,  it follows easily for planar graphs that are not outerplanar, that $\pi^\circ (G)$ is either 2 or 3. In this context, we  conjecture  the following.

\begin{conjecture*}
The circular separation dimension of a  planar graph is at most two.
\end{conjecture*}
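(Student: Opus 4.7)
The plan is to attempt the conjecture by induction on the outerplanarity depth $k$ of a planar graph, i.e.\ the minimum number of times one can iteratively peel the vertex set of the outer face of a planar embedding before the graph becomes empty. The base case $k=1$ is the characterization of outerplanar graphs as those with $\pi^\circ = 1$ due to Loeb and West, and the case $k = 2$ is the $2$-outerplanar theorem the authors announce in their abstract. For the inductive step, I would fix a planar embedding of $G$, let $L_0$ denote the vertex set of the outer face, and let $G' = G - L_0$; then $G'$ is planar and inherits outerplanarity depth at most $k-1$.

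By the inductive hypothesis, there exist circular orderings $\sigma'_1, \sigma'_2$ of $V(G')$ that together separate every pair of non-adjacent edges of $G'$. The key step is to extend $\sigma'_1$ and $\sigma'_2$ to circular orderings $\sigma_1, \sigma_2$ of $V(G)$ by inserting the vertices of $L_0$ using the planar embedding for guidance. A natural first attempt is to place $L_0$ in the cyclic order given by a traversal of the outer face, inserting each $v \in L_0$ close to the position of its neighbours in the next layer $L_1$ in $\sigma'_j$. Two families of non-adjacent edge pairs must then be checked: pairs of edges with at least one endpoint in $L_0$, and pairs where one edge lies in $G'$ and the other meets $L_0$. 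The goal is to show that every such pair is separated in at least one of $\sigma_1, \sigma_2$.

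The main obstacle is the interaction between $L_0$ and the interior of $G$. A vertex $v \in L_0$ may have many neighbours scattered throughout $L_1$, and their relative positions are fixed by the inductively chosen $\sigma'_1, \sigma'_2$; the chords emanating from $v$ can then be forced to alternate with some non-adjacent inner edge in \emph{both} extensions simultaneously. Overcoming this probably requires abandoning strict induction and instead building $\sigma_1, \sigma_2$ globally from structural planar tools, for example from a Schnyder wood, an $st$-numbering, or a canonical ordering, so that the two circular orderings play complementary ``radial'' and ``peripheral'' roles relative to the embedding. I expect this coupling between layers to be the crux of the difficulty, and it is presumably why the present paper establishes the bound only for the $2$-outerplanar and series-parallel subclasses rather than the full conjecture.
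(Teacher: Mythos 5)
The statement you were given is stated in the paper only as a conjecture; the paper does not prove it, and proves $\pi^\circ(G)=2$ only for the subclasses of maximal $2$-outerplanar graphs and series-parallel graphs. Your proposal likewise does not constitute a proof, and the gap is exactly where you yourself locate it: the inductive step is never carried out. Induction on outerplanarity depth requires extending two circular orderings of $G-L_0$ that were chosen with no reference to the edges between $L_0$ and the next layer; since a vertex of $L_0$ may have neighbours spread across $L_1$ whose cyclic positions are already frozen by the inductive hypothesis, its incident chords can be forced to alternate with a non-adjacent inner edge in both extensions, and nothing in the hypothesis lets you repair this. Saying that one should ``probably abandon strict induction'' in favour of Schnyder woods, canonical orderings or $st$-numberings names a direction, not an argument, so no part of the conjecture beyond what is already known is established.

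It is also worth noting that even the paper's depth-two result does not proceed by inheriting a previously fixed pairwise suitable family of the inner layer: the inner permutation is re-engineered jointly with the outer one, via the arc-removal and reversal operations, a careful choice of start vertex, and block/component decompositions of $[V_1]$. That construction leans heavily on the inner layer being outerplanar, so that a single ordering already separates all of $E_1$ and its ``well-parenthesis'' structure can be unfolded to fix $E_1$--$E_{12}$ crossings. For three or more layers the inner graph is no longer outerplanar and no analogue of this flexibility is available, which is precisely why the general conjecture --- and even the maximal $k$-outerplanar case for $k\ge 3$ --- is left open in the paper's concluding section.
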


Note that even for $K_4$, the smallest  2-outerplanar graph, we have $\pi^\circ(K_4)>1$. On the other hand, from the result of~\cite{BasavarajuCGMR16}, we know that  $\pi(K_4)=3$.  But two circular permutations are enough to separate the edges of $K_4$ (if $\{a,b,c,d\}= V(K_4)$ then the permutations $(a,b,c,d)$ and $(a,c,b,d)$ separate all pairs of non-adjacent edges).

A planar graph is said to be $k$-outerplanar, $k \geq 2$, if it has a planar embedding such that by removing the vertices on the unbounded face we obtain a $(k-1)$-outerplanar graph.  It is a well known fact that every planar graph is $k$-outerplanar for some integer $k$ (typically, much smaller than $n$)~\cite{Bienstock}. Thus,  it  is  natural to  investigate  the circular separation dimension of $2$-outerplanar  graphs, and to see whether one can generalize the result for $k$-outerplanar graphs. In Section~\ref{sec: 2-out}, we  prove that the circular separation dimension of a $2$-outerplanar  graph is  two.

It is interesting to note that for series-parallel graphs, which is $k$-outerplanar for some positive integer $k$, two circular permutations are sufficient to separate all non-adjacent pairs of edges. We discuss this result in Section~\ref{sec: series}.

\subsection{Preliminaries}
A graph is \emph{outerplanar} if it has a planar embedding such that all vertices are on the outer face. A graph $G$ is \emph{$2$-outerplanar} if $G$ has a planar embedding such that the subgraph obtained by removal of the vertices of the exterior face is outerplanar. In this paper, we consider $G$ with its $2$-outerplanar embedding.\\

\noindent A graph is a \emph{series-parallel} graph if it can be turned into a $K_2$ by a sequence of the following operations.
\begin{itemize}
\item []Replacement of a pair of parallel edges with a single edge that connects their common endpoints. This operations is called a \emph{parallel operation}.
\item []Replacement of a pair of edges incident to a vertex of degree $2$ other than two distinguished vertices, called source and sink vertex, with a single edge. This operation is called a \emph{series operation}.
\end{itemize}

\noindent A \emph{cut vertex} of a connected graph is a vertex whose deletion disconnects the graph. 
A connected graph is \textit{biconnected} if  it requires  deletion of  at least two vertices to disconnect the graph.
We consider a \emph{block} of a graph to be the vertex set of a maximal biconnected subgraph. Other terminologies, which are not defined here, can be found in~\cite{Diestel}. 

\subsection{Definitions and Notations}
Let  $G=(V,E)$ be  a graph with vertex set $V$ and edge set $E$. Let $\sigma : V \rightarrow \{1,2,\dots, |V|\}$ be a permutation of elements of the vertex set $V$.  A permutation  $\sigma = (v_1,v_2,\dots,v_k)$ of $V$ implies that $\sigma(v_i)<\sigma(v_j), \; 1\leq i < j \leq k$.  
A  \textit{sub-permutation}  $\sigma'$ of  $\sigma$ restricted to  $V'\subseteq V$ is defined as $\sigma':V'\rightarrow\{1,2,\dots,|V'|\}$ such that $\sigma'(v_i)<\sigma'(v_j)$ if and only if  $\sigma(v_i)<\sigma(v_j)$ for all $v_i,v_j\in V'$. We define $reversal(\sigma)= (v_k,v_{k-1},\dots,v_2,v_1)$. If $\alpha=(a_1,a_2,\dots,a_n)$ and $\beta=(b_1,b_2,\dots,b_m)$ are two permutations, then the permutation $(\alpha,\beta)$ is defined as $(a_1,a_2,\dots,a_n,b_1,b_2,\dots,b_m)$ where $(\alpha,\beta)(a_i)=\alpha(a_i)$ and $(\alpha,\beta)(b_i)=\beta(b_i)+n$. 
A pair of non-adjacent edges in  $G$ is \emph{separated} by a circular ordering or circular permutation of $V$ if the endpoints of the two edges do not alternate. 
If two non-adjacent edges are not separated, then they are said to \textit{cross} each other in the permutation. If a pair of edges cross each other in the first permutation and are separated in the second, then the second permutation is said to \textit{resolve} these pair of edges. A family $\sigma$ of circular permutations of $V$ is called \textit{pairwise suitable} for $G$ if, for every pair of non-adjacent edges in $G$, there exists a permutation in $\sigma$ in which the edges are separated. The \emph{circular separation dimension} of a graph $G$, denoted by $\pi^\circ(G)$, is the minimum cardinality of such a family. We assume that every graph under consideration is connected, as $\pi^\circ(G) = \max\{\pi^\circ(H)\:|\:H$ is a component of $G \}$. We use $N(v)$  to denote the set of all neighbours of $v$.

\section{2-Outerplanar graphs}
\label{sec: 2-out}
In this section, we prove the following main theorem.

\begin{theorem*}
The circular separation dimension of a $2$-outerplanar graph is exactly two.
\end{theorem*}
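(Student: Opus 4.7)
The lower bound $\pi^\circ(G)\ge 2$ is immediate: by the Loeb--West characterization only outerplanar graphs have $\pi^\circ = 1$, and already $K_4$, the smallest 2-outerplanar graph, fails to be outerplanar. The content of the theorem is therefore the upper bound $\pi^\circ(G)\le 2$, and my plan is to exhibit two explicit circular orderings whose union separates every non-adjacent pair of edges.

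The first step would be to reduce to the biconnected case. Two non-adjacent edges lying in different blocks of $G$ share at most a cut vertex and are automatically separated whenever each block occupies a contiguous arc of the circle; stitching together separating families block by block then gives $\pi^\circ(G) = \max_B \pi^\circ(B)$. Assume from now on that $G$ is biconnected and fix a 2-outerplanar embedding in which the outer face is bounded by a cycle $C = v_1 v_2 \cdots v_m$ and $H := G - V(C)$ is outerplanar. Let $H_1,\ldots,H_k$ be the connected components of $H$; each $H_j$ lies in a ``pocket'', that is, a face of the subgraph of $G$ drawn on $V(C)$ together with the chords of $C$. Using the outerplanarity of $H_j$, fix an outer-face circular ordering $\tau_j$ of $V(H_j)$.

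The two orderings are then built as follows. For $\sigma_1$, start with the cyclic ordering $(v_1,\ldots,v_m)$ of $C$ and insert each $\tau_j$ into the arc corresponding to the boundary of the pocket containing $H_j$. For $\sigma_2$, perform the same construction but substitute $\mathit{reversal}(\tau_j)$ for $\tau_j$ at every insertion point. Verification would proceed by case analysis on a pair $\{e,f\}$ of non-adjacent edges: if both edges lie on $C$ (possibly as chords) then the cyclic order of $C$ retained in $\sigma_1$ separates them; if both lie in the same $H_j$ then outerplanarity of $H_j$ ensures separation by exactly one of $\sigma_1$ or $\sigma_2$; if they lie in different pockets then the contiguous-arc structure of the pockets separates them in $\sigma_1$.

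The main obstacle will be the mixed cases: a chord of $C$ paired with an edge of a pocket, or two edges each joining an outer vertex to an inner vertex. Here the 2-outerplanar embedding must be used essentially, because the planarity constraints dictate where inner vertices may sit relative to chords of $C$ and to other inner components. I expect the crux of the proof to be a local lemma guaranteeing that, for each pocket $H_j$, the two orientations of $\tau_j$ used in $\sigma_1$ and $\sigma_2$ handle the two possible crossing patterns of any mixed pair involving $H_j$; the delicate point is verifying that the choice of anchor vertex at which $\tau_j$ is inserted, combined with the anti-correlated orientation between $\sigma_1$ and $\sigma_2$, globalizes to all such pairs simultaneously.
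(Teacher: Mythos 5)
There is a genuine gap, and it is exactly at the point you flag as the crux. Your second ordering (insert $\mathit{reversal}(\tau_j)$ instead of $\tau_j$) cannot resolve crossings between an edge of an inner component and an edge joining an inner vertex to an outer vertex. Indeed, suppose $f=\{l,m\}$ is an edge of $H_j$ and $e=\{i,w\}$ with $i\in V(H_j)$ lying between $l$ and $m$ in $\tau_j$ and $w\in V(C)$. In both $\sigma_1$ and $\sigma_2$ the vertices of $H_j$ occupy one contiguous arc, $w$ lies outside that arc, and $i$ lies between $l$ and $m$ inside it, because reversing a contiguous segment preserves betweenness within the segment. Hence $e$ and $f$ alternate in both of your orderings, no matter where the arc is anchored and regardless of orientation. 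A concrete counterexample is the octahedron: outer triangle $ABC$, inner triangle $xyz$ with $x\sim A,B$, $y\sim B,C$, $z\sim C,A$; the non-adjacent pair $\{x,z\}$ and $\{y,B\}$ crosses in both $\sigma_1$ and $\sigma_2$ of your construction. So the hoped-for local lemma about anti-correlated orientations is false; reversal of the inner ordering is not the right second permutation. (A smaller remark: your claim $\pi^\circ(G)=\max_B\pi^\circ(B)$ over blocks is plausible and can be arranged by keeping each block contiguous in both circular orders, but it is not the decomposition the paper uses, and it does not touch the real difficulty.)

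What is missing is precisely the paper's ``arc-removal'' idea. The paper keeps the outer layer's cyclic order essentially fixed in both permutations, but in the second permutation it replaces the outer-face ordering of the inner layer by a genuinely different permutation: since the first ordering already separates all inner--inner pairs, the inner edges form a nested (parenthesis) structure there, and arc-removal ``unfolds'' this structure so that for every inner edge $\{l,m\}$ and every inner vertex $i$ formerly between $l$ and $m$, the vertices $l,m$ now lie between $i$ and all outer vertices, which separates $e$ from $f$. The second ordering is allowed to (and does) destroy inner--inner separation. In addition, the anchor is not arbitrary: the paper chooses the start vertex $s_1^2$, the vertex $s_1^1$, and the edge $e'$ carefully to control crossings between two inner-to-outer edges (your proposal leaves this unaddressed), and then extends from a biconnected inner layer to a connected one by contracting blocks, and to a disconnected inner layer via separating chords. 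Your outline would need its central construction replaced along these lines before the case analysis can go through.
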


We use the following result on outerplanar graphs which was proved by Loeb and West.
\begin{lemma}[\cite{LoebWest16}]
\label{lemma:outerplanar_perm}
If $G$ is a maximal outerplanar graph, then any circular ordering of vertices following the outer face produces a circular permutation that separates all pairs of non-adjacent  edges of $G$.
\end{lemma}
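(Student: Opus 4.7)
The plan is to exploit two structural facts about a maximal outerplanar graph $G$ on $n\geq 3$ vertices: first, the boundary of the outer face is a Hamiltonian cycle $C$ of $G$; second, no two edges cross in the given plane embedding. Fix the circular ordering $\sigma$ of $V(G)$ induced by traversing $C$. Every edge of $G$ is then either an edge of $C$, joining two vertices that are consecutive in $\sigma$, or a chord of $C$. I need to show that no pair of non-adjacent edges has endpoints that alternate along $\sigma$.

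Let $e=uv$ and $f=xy$ be non-adjacent edges. Removing $u$ and $v$ from $C$ splits its cyclic vertex sequence into two arcs, and endpoints of $e$ and $f$ alternate in $\sigma$ precisely when each arc contains exactly one of $x,y$. If $e$ is an edge of $C$ then one of the two arcs is empty; since $e$ and $f$ share no endpoint, both $x$ and $y$ lie in the nonempty arc, ruling out alternation. The symmetric argument with the roles of $e$ and $f$ exchanged handles the case where $f$ is an edge of $C$. Hence the only case left is that both $e$ and $f$ are chords of $C$.

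For the chord-chord case I would appeal directly to planarity. View $C$ as a simple closed curve in the plane; it bounds a topological disk $D$ in whose interior the embedding draws all chords of $G$. If the endpoints of chords $e$ and $f$ alternate along $C$, then any two simple curves inside $\overline{D}$ connecting the respective endpoint pairs must meet by the Jordan curve theorem, and so in particular the embedded chords $e$ and $f$ would cross. This contradicts the planarity of the embedding, so $e$ and $f$ cannot alternate. Therefore every pair of non-adjacent edges is separated by $\sigma$, which is exactly the statement of the lemma; note that the traversal direction is irrelevant since reversing a circular permutation preserves separation.

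The only real obstacle is a matter of presentation rather than substance: one must record the folklore fact that the outer face of a maximal outerplanar graph of order at least three is a Hamiltonian cycle, and translate the combinatorial notion of "endpoints alternating in a circular ordering" into the geometric statement that the corresponding chords of the inscribed polygon cross. Once these are in hand, the separation claim follows with no further computation.
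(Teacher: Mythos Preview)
Your argument is correct: the outer boundary of a maximal outerplanar graph on at least three vertices is a Hamiltonian cycle, boundary edges cannot alternate with any disjoint edge, and two alternating chords would have to cross inside the disk bounded by the Hamiltonian cycle, contradicting planarity. This is the standard proof.

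As for comparison with the paper: there is nothing to compare against. The paper does not prove Lemma~\ref{lemma:outerplanar_perm}; it merely quotes it as a result of Loeb and West and uses it as a black box throughout Section~\ref{sec: 2-out}. Your write-up therefore supplies what the paper omits, and the only point worth tightening is the appeal to the Jordan curve theorem, which you could replace by the purely combinatorial observation that in any plane drawing, two chords of a cycle whose endpoints alternate along the cycle must intersect inside the cycle---this is immediate from the fact that a chord separates the open disk into two regions, each meeting the boundary in one of the two arcs determined by the chord's endpoints.
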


\par Let $G=(V,E)$ be a maximal 2-outerplanar graph in the sense that every face is a triangle except the outer face. We fix a planar embedding of $G$ and  our proofs are based on such an embedding. Let $V_2$ be the set of vertices appearing on the exterior face of $G$, and $V_1=V\setminus V_2$. Let $n_i = |V_i|$ for $i\in \{1,2\}$. Let $E_i$ be the set of edges in $[V_i]$. Clearly, each $[V_i]$ is a 1-outerplanar graph. Let $E_{12} = E(G) \setminus (E_1\cup E_2)$ (this is exactly the set of edges with one end point in $V_1$ and the other end point in $V_2$). Thus, we have a partition of the edge set of $G$ as $E(G)=E_1 \cup E_2 \cup E_{12}$. To verify if a family of circular permutations is pairwise suitable, we check if there can be any \textit{(i)} $E_1-E_1$ crossing \textit{(ii)} $E_2-E_2$ crossing \textit{(iii)} $E_1-E_2$ crossing \textit{(iv)} $E_1-E_{12}$ crossing  \textit{(v)} $E_{12}-E_2$ crossing \textit{(vi)} $E_{12}-E_{12}$ crossing in both permutations.
Let a vertex of $V_i$ be denoted as $s^i$.

Note that if $[V_2]$ contains a cut vertex, then by adding edges on the outer face between neighbours of the cut vertex (retaining planarity), one can  make $[V_2]$ biconnected (see Figure \ref{fig:OuterLayerContainsCutVertex}). The addition of edges does not decrease the value of $\pi^\circ$ (because of its monotonicity property).  Thus, we only consider the case when the graph $[V_2]$ is biconnected.  \\

\begin{figure}[t]
\begin{center}
    \subfigure[$\lbrack V_2 \rbrack$ is not biconnected.\label{fig:OuterLayerCutVertex}]{\includegraphics[width=3.5cm]{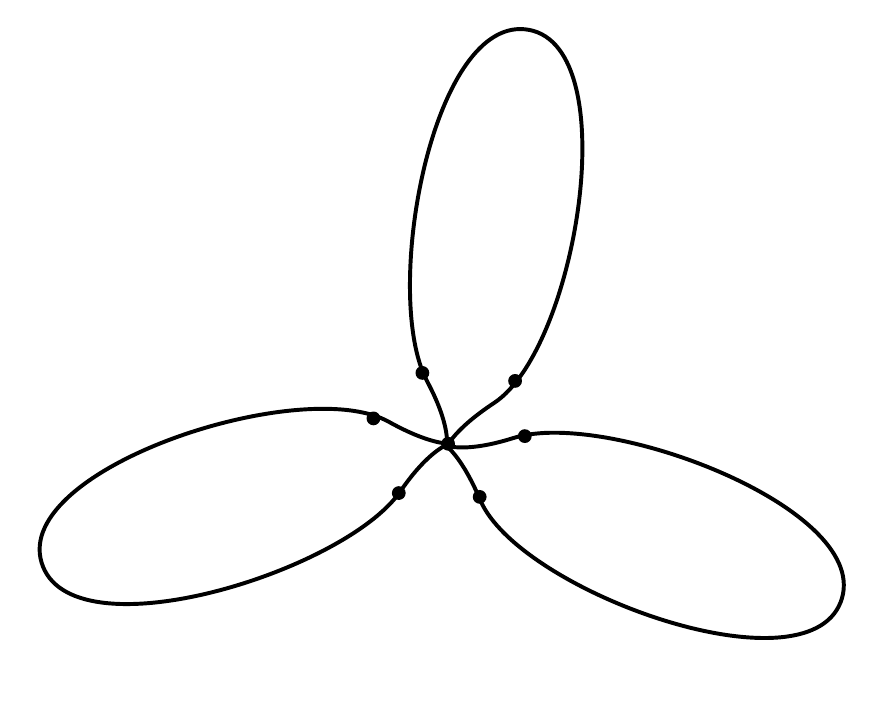}}
    \hspace{3cm}
    \subfigure[$\lbrack V_2 \rbrack$ is biconnected.\label{fig:OuterLayerCutVertexWithEdges}]{\includegraphics[width=3.5cm]{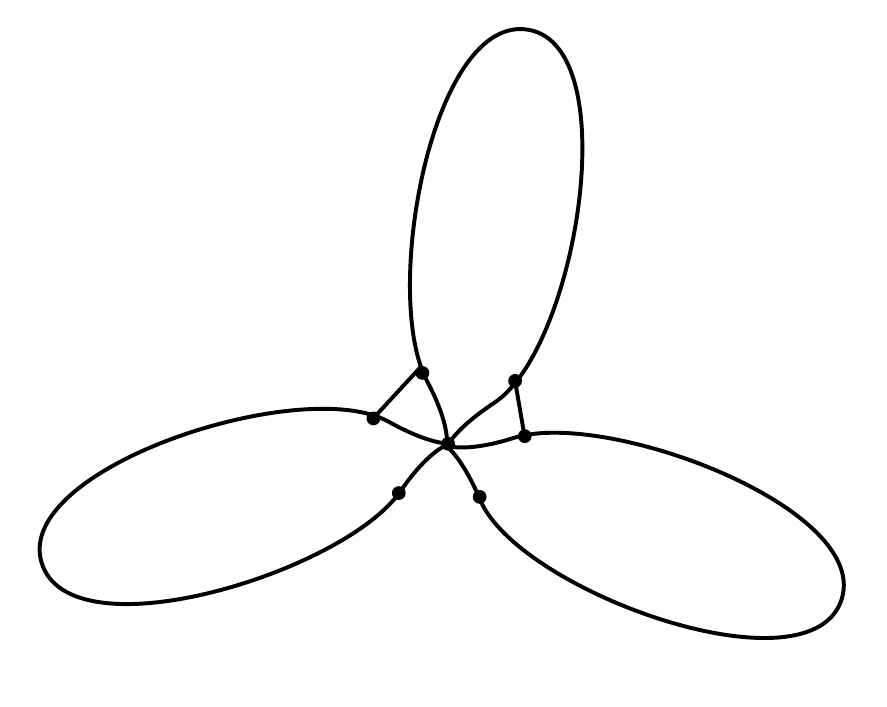}}
    \caption{Suitable addition of edges when $[V_2]$ contains a cut vertex.}
    \label{fig:OuterLayerContainsCutVertex}
\end{center}
\end{figure}

\par  We first introduce  a technique  which is often used to resolve $E_1-E_{12}$ crossing called the ``arc-removal'' technique.\\

\begin{figure}[b]
\begin{center}
    \subfigure[Before arc-removal technique.\label{fig:Arc_removal_technique}]{\includegraphics[width=6cm]{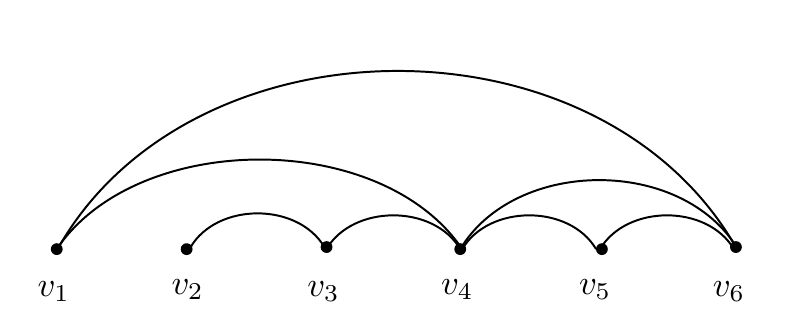}}
    \hspace{2cm}
    \subfigure[After arc-removal technique fixing the right end.\label{fig:After_arc_removal_technique}]{\includegraphics[width=6cm]{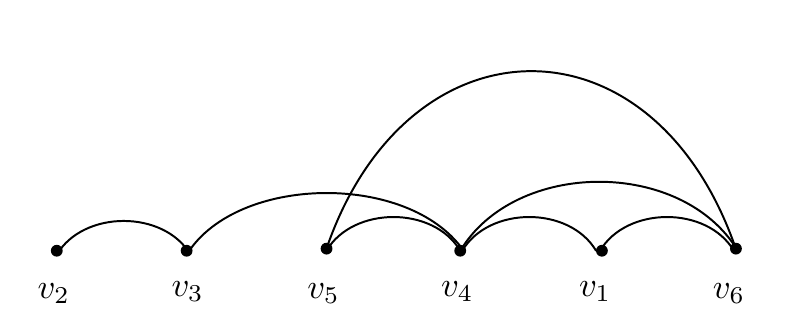}}
    \caption{Resolving $E_{1}-E_{12}$ crossing.}
    \label{fig:arcrem}
\end{center}
\end{figure}

\par Let  $G$ be a maximal  2-outerplanar graph  such that $[V_1]$ is  biconnected. Let $\sigma'$ be a permutation of $V_1$ according to Lemma \ref{lemma:outerplanar_perm}. Similarly, let $(s_1^2,s_2^2,\dots,s_{n_2}^2)$ be a permutation of $V_2$. Let $\sigma_1 = (s_1^2,\sigma',s_2^2,s_3^2,\dots,s_{n_2}^2)$. Suppose $e=(x,y) \in E_1$ where $\sigma'(x)<\sigma'(y)$. Let $f\in E_{12}$ cross  $e$ in $\sigma_1$. Observe that each $v\in V_1$ such that $\sigma'(x)<\sigma'(v)<\sigma'(y)$ is a ``candidate" for an endpoint of  $f$. 
To separate $e$ and $f$,  we construct a permutation $\sigma''$ where each $v$ appears either before or after  both  $x$ and $y$. We need to do this for all of the edges in $E_1$. Here we use the following fact: As all the non-adjacent pairs of edges  in $E_1$ are separated in $\sigma'$, all of the edges in $E_1$ form a ``well-parenthesis" structure in the ordering $\sigma'$ where each parenthesis represents an edge (see Figure \ref{fig:Arc_removal_technique}).  So,  we  can ``unfold" the ``parenthesis" one after another (see Figure \ref{fig:After_arc_removal_technique}).\\

The  arc-removal technique  takes as input the permutation $\sigma'$ and a parameter $p\in \{r,l\}$  which  decides whether we need  to  maintain  the position of the right most vertex or the  left most vertex  of $\sigma'$. The arc-removal technique fixing the right end is explained as an algorithm below.\\

\par \noindent \textbf{Input:} Edge separating permutation $\sigma' = (v_1,v_2,\dots,v_{n_1})$ of vertices  in $[V_1]$  and a parameter $p\in \{l,r\}$.\\
\noindent \textbf{Output:} A permutation $\sigma''$ of $\{v_1,v_2,\dots,v_{n_1}\}$.\\
If  $p=r$, then we perform the following.\\
\noindent Initialization: Set $i = n_1$ and set $k = n_1$. Mark every vertex as \texttt{not selected}. Let $\sigma''$ be an empty ordering.
\begin{itemize}
\item[1.] If $v_i$ is \texttt{selected} go to step 4. If $v_i$ is \texttt{not selected} check if $v_i$ has already appeared in $\sigma''$. If not, set $\sigma''(v_i) = k$. Set $k = k-1$.
\item[2.] Choose minimum $j$ for all $v_j \in N(v_i)$ where $v_j$ does not appear in $\sigma''$. Set $\sigma''(v_j) = k$ and  $k= k-1$.
\item[3.] Repeat step 2 till all the vertices in $N(v_i)$ appear in $\sigma''$. Mark $v_i$ as \texttt{selected}.
\item[4.] Set $i = i-1$. Go to step 1 if $k>0$.
\end{itemize}
If $p=l$,  then we perform a similar technique as described above. The only differences are: \textit{(i)} In the initialization, we set both $i,k=1$ in order to fix the left end; \textit{(ii)} Instead of decrementing, we increment these variables in steps 1, 2 and 4; \textit{(iii)} In step 2, we choose the maximum $j$ among neighbours \texttt{not selected}; \textit{(iv)} In step 4, we go to step 1 if $k<n_1$.
\begin{claim}\label{claim:arcremovalworks}
Let  $\sigma'$ be a permutation of $V_1$ that separates all non-adjacent pairs of edges in $[V_1]$, and $\sigma_1 = (s_1^2,\sigma',s_2^2,s_3^2,\dots,s_{n_2}^2)$. Then the permutation $\sigma'' = (arc\mbox{-}removal(\sigma',p),s_1^2,s_2^2,...,s_{n_2}^2)$ resolves every pair of crossing edges $e,f$, where $e \in E_{12},f\in E_1$ and $e,f$ are crossing in $\sigma_1$, and $p \in \{l,r\}$. 
\end{claim}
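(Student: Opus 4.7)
My plan is to reduce the claim to the following metric property of $p = r$ arc-removal: for every edge $(v_a, v_b) \in E_1$ with $a < b$ in $\sigma'$ and every $v_c \in V_1$ with $a < c < b$, the output satisfies $\sigma''(v_c) < \sigma''(v_a)$ and $\sigma''(v_c) < \sigma''(v_b)$. This suffices, because in $\sigma_1 = (s_1^2, \sigma', s_2^2, \ldots, s_{n_2}^2)$ the vertices of $V_1$ occupy a single contiguous circular arc, so an edge $e = (v_c, s^2) \in E_{12}$ alternates with $f = (v_a, v_b) \in E_1$ iff $v_c$ lies strictly between $v_a$ and $v_b$ in $\sigma'$; and in $\sigma'' = (\textit{arc-removal}(\sigma', p), s_1^2, \ldots, s_{n_2}^2)$ the set $V_2$ again occupies one contiguous arc (at positions $n_1+1,\ldots,n$), so $e$ and $f$ fail to alternate iff $v_c$ lies outside the linear interval $\bigl(\min\{\sigma''(v_a), \sigma''(v_b)\},\ \max\{\sigma''(v_a), \sigma''(v_b)\}\bigr)$, which is precisely the metric property.

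To prove the metric property I would trace when $v_c$ is inserted into $\sigma''$. Let $t(v)$ denote placement time; since positions are assigned from $n_1$ downward when $p = r$, $t$ decreases with $\sigma''$-value. The vertex $v_c$ is placed during the processing of some $v_{i^*}$ with $i^* \geq c$: either during its own Step 1 ($i^* = c$), or during Step 2 as an unplaced $V_1$-neighbour of $v_{i^*}$ ($i^* > c$). The decisive case is $i^* > b$: such a placement would require $(v_c, v_{i^*}) \in E_1$, but with $a < c < b < i^*$ the two edges $(v_a, v_b)$ and $(v_c, v_{i^*})$ are non-adjacent and their endpoints alternate in $\sigma'$, contradicting the separation property of $\sigma'$ supplied by Lemma~\ref{lemma:outerplanar_perm}. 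Hence $c \leq i^* \leq b$.

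In each remaining subcase ($i^* \in \{c\} \cup (c, b) \cup \{b\}$), the processing of $v_b$ completes no later than the placement of $v_c$; since Step 1 of $v_b$'s processing places $v_b$ whenever it has not already appeared, this gives $t(v_b) < t(v_c)$, i.e., $\sigma''(v_b) > \sigma''(v_c)$. For $v_a$: being a $V_1$-neighbour of $v_b$ via $(v_a, v_b) \in E_1$, any still-unplaced $v_a$ is inserted during Step 2 of $v_b$'s processing, and because unplaced neighbours are inserted in increasing index order and $a < c$, vertex $v_a$ precedes $v_c$ even in the subcase $i^* = b$; if $v_a$ was placed earlier, trivially $t(v_a) < t(v_b) < t(v_c)$. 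In every subcase $\sigma''(v_a) > \sigma''(v_c)$, establishing the metric property and hence the claim for $p = r$. The case $p = l$ is fully symmetric: positions are filled from $1$ upward, Step 2 picks maximum-index unplaced neighbours, and an identical analysis yields $\sigma''(v_c) > \max\{\sigma''(v_a), \sigma''(v_b)\}$, which again places $v_c$ and $s^2$ in the same circular arc relative to $f$.

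The main obstacle is exactly the ruling out of $i^* > b$: without the separation property of $\sigma'$, an interior vertex $v_c$ could be dragged to a large $\sigma''$-value by a neighbour lying outside $[v_a, v_b]$, and the metric property would fail. Everything else is bookkeeping around the algorithm's iteration order and the fact that $V_2$ sits as a contiguous arc in $\sigma''$.
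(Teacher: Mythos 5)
Your proposal is correct and follows essentially the same route as the paper: you establish that arc-removal places the interior vertex $v_c$ before (for $p=r$) both endpoints of any $E_1$ edge spanning it, ruling out the bad case via an edge $(v_c,v_{i^*})$ with $i^*>b$ that would cross $(v_a,v_b)$ in $\sigma'$, which is exactly the paper's remark that $s_i^1$ cannot be listed before $s_m^1$. Your case analysis on $i^*$ and the contiguity of the $V_2$ block just spell out in more detail what the paper compresses into its appeal to Step~2 of the algorithm.
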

\begin{proof}
Following our notation, suppose $e = (s^1_i, s^2_j)$ and $f = (s^1_l, s^1_m)$ where without loss of generality, we can assume that $1\leq l<i<m \leq n_1, 1\leq j \leq n_2$. Suppose $p=r$. Step 2 of the algorithm ensures that $s_i^1$ appears before $s^1_l$ and $s^1_m$. Thus in $\sigma''$, $s^1_l$ and $s^1_m$ appear in between $s^1_i$ and $s_j^2$  and therefore $e,f$ are separated. It should be noted that $s_i^1$ cannot be listed in $\sigma''$ before $s_m^1$ is listed, as this would mean there is an edge $g=(s_i^1,s_h^1)$ such that $m<h$, a contradiction as $f,g$ would cross in $\sigma'$. When $p=l$, $s^1_l$ and $s^1_m$ appear before $s^1_i$ and $s_j^2$ in $\sigma''$.
\end{proof}

\begin{claim}\label{claim:reversalworks}
Let  $\sigma'$ be a permutation of $V_1$ that separates all non-adjacent pairs of edges in $[V_1]$, and $\sigma_1 = (s_1^2,\sigma',s_2^2,s_3^2,\dots,s_{n_2}^2)$. Then the permutation $\sigma'' = (reversal(arc\mbox{-}removal(\sigma',p)),$ $s_1^2,s_2^2,...,s_{n_2}^2)$ resolves every pair of crossing edges $e,f$, where $e \in E_{12},f\in E_1$ and $e,f$ are crossing in $\sigma_1$, and $p \in \{l,r\}$.
\end{claim}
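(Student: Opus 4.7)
The plan is to reuse the notation and setup of the proof of Claim~\ref{claim:arcremovalworks}. Let $e = (s^1_i, s^2_j) \in E_{12}$ and $f = (s^1_l, s^1_m) \in E_1$ be a pair crossing in $\sigma_1$, with $1 \leq l < i < m \leq n_1$ and $1 \leq j \leq n_2$. The only difference between the two claims is that the leading $V_1$ block of $\sigma''$ is now the reversal of $arc\mbox{-}removal(\sigma',p)$, while the $V_2$ tail $s^2_1, s^2_2, \ldots, s^2_{n_2}$ is unchanged.

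The first step is to recall the invariant of the arc-removal algorithm already used in the proof of Claim~\ref{claim:arcremovalworks}. For $p = r$, Step~2 forces $s^1_i$ to be listed before both $s^1_l$ and $s^1_m$; for $p = l$, the symmetric rule places $s^1_i$ after both of them. In either case, inside the arc-removal $V_1$ block the two endpoints $s^1_l$ and $s^1_m$ lie on the same side of $s^1_i$. Reversing this block flips that side but preserves togetherness: $s^1_l$ and $s^1_m$ still occupy a common side of $s^1_i$ inside the $V_1$ block, just the opposite side from before.

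Next I would trace the circle on the four distinguished vertices. The chord $e$ splits the circle into two arcs: one runs from $s^1_i$ through the part of the $V_1$ block lying beyond $s^1_i$ and then through $s^2_1, \ldots, s^2_{j-1}$ up to $s^2_j$; the other runs from $s^2_j$ through $s^2_{j+1}, \ldots, s^2_{n_2}$, wraps around, and re-enters the $V_1$ block from the far end up to $s^1_i$. For $p = r$, the reversal places both $s^1_l$ and $s^1_m$ in the second arc; for $p = l$, it places both in the first. In either case, the endpoints of $f$ sit in a common arc, so the four endpoints do not alternate and $\sigma''$ separates $e$ and $f$.

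I do not anticipate a genuine obstacle here; the argument is essentially a one-paragraph reduction to Claim~\ref{claim:arcremovalworks}. The only mild care point is keeping the book-keeping of `which arc' straight across the two choices of $p$, which the common-side invariant above handles uniformly. A subtlety worth a line of verification is that the reversal cannot slide $s^1_i$ in between $s^1_l$ and $s^1_m$, since all three remain inside the $V_1$ block and the arc-removal guarantee placed $s^1_i$ strictly on one flank of the pair to begin with.
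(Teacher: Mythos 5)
Your proposal is correct and takes essentially the same route as the paper's own proof: it reduces to the placement invariant established in the proof of Claim~\ref{claim:arcremovalworks} (both $s^1_l$ and $s^1_m$ lie on one side of $s^1_i$ in the arc-removal block) and then observes that reversing the $V_1$ block merely flips which side, so the endpoints of $f$ still occupy a single arc determined by $e$ and no alternation occurs. The arc-tracing bookkeeping you give matches the paper's case analysis ($p=r$: both before $s^1_i$ and $s^2_j$; $p=l$: both between $s^1_i$ and $s^2_j$), so nothing further is needed.
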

\begin{proof}
Consider edges $e,f$ as in Claim \ref{claim:arcremovalworks}. Suppose $p=r$. Then $(arc\mbox{-}removal(\sigma',p)),$ $s_1^2,s_2^2,...,s_{n_2}^2)$ resolves $e,f$ crossing as $s^1_l$ and $s^1_m$ appear in between $s^1_i$ and $s_j^2$ due to Claim~\ref{claim:arcremovalworks}. The reversal would now imply that $s^1_l$ and $s^1_m$ appear before $s^1_i$ and $s_j^2$ in $\sigma''$. If $p=l$, then $s^1_l$ and $s^1_m$ appear in between $s^1_i$ and $s_j^2$ in $\sigma''$.
\end{proof}

We now prove two lemmas that help us to prove the main theorem of this section.
\begin{lemma}
\label{lemma:biconnected}
Let $G$ be a maximal $2$-outerplanar graph. If the graph induced on $V_1$ is biconnected, then $\pi^\circ(G) = 2$.
\end{lemma}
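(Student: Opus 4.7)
The plan is to produce two circular permutations of $V$ forming a pairwise suitable family. Since both $[V_1]$ and $[V_2]$ are biconnected maximal outerplanar graphs (the latter by the discussion preceding the lemma, the former by hypothesis together with the maximality of $G$), Lemma~\ref{lemma:outerplanar_perm} supplies outer-face orderings $\sigma' = (v_1,\dots,v_{n_1})$ of $V_1$ and $(s^2_1,\dots,s^2_{n_2})$ of $V_2$ that separate every non-adjacent pair in $E_1$ and in $E_2$ respectively.

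For the first circular permutation I would take
\[
\sigma_1 = (s^2_1,\; \sigma',\; s^2_2, s^2_3, \dots, s^2_{n_2}),
\]
so that all of $V_1$ occupies a single contiguous arc. Contracting this arc to a point preserves the alternation pattern of every $V_2$-supported pair, so $\sigma_1$ separates every $E_2$-$E_2$ pair; every $E_1$-$E_2$ pair is trivially separated because both endpoints of an $E_1$-edge sit inside the contracted block; and $\sigma'$ separates the $E_1$-$E_1$ pairs inside the block. Hence every crossing that can possibly survive in $\sigma_1$ involves at least one edge of $E_{12}$.

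For the second circular permutation I would set
\[
\sigma_2 = (\mathrm{arc\text{-}removal}(\sigma', p),\; s^2_1, s^2_2, \dots, s^2_{n_2}),
\]
possibly composed with the $V_1$-reversal of Claim~\ref{claim:reversalworks}, with $p\in\{l,r\}$ chosen to suit the graph. By Claims~\ref{claim:arcremovalworks} and~\ref{claim:reversalworks} this permutation immediately resolves every $E_1$-$E_{12}$ crossing in $\sigma_1$. Note that in $\sigma_2$ both $V_1$ and $V_2$ are contiguous blocks, so $E_2$-$E_2$ pairs are still separated and $E_1$-$E_2$ pairs are still trivially separated.

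The remaining and main obstacle is to verify that $\sigma_2$ also separates every $E_{12}$-$E_{12}$ and every $E_{12}$-$E_2$ crossing of $\sigma_1$. Here I would invoke the maximality of $G$: because every internal face is a triangle, the $V_2$-neighbours of each $v\in V_1$ form a contiguous arc of the $V_2$-cycle and the $V_1$-neighbours of each $s^2_j$ form a contiguous arc of the $V_1$-cycle. These contiguity properties, together with the planarity of the embedding, push every surviving crossing into a short list of configurations that can be checked to be non-alternating in $\sigma_2$. The delicate point I expect is arguing that a single uniform choice of the arc-removal parameter $p$ (and of whether or not to apply the reversal) works simultaneously for all such crossings, rather than having to be tuned separately for each offending edge pair; this is where the global structure of the planar embedding of $G$, as opposed to the local triangle-face property, will need to be used.
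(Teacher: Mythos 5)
Your setup matches the paper's: the same shape of $\sigma_1$ (all of $V_1$ as one contiguous arc wedged after a start vertex $s_1^2$ of $V_2$), the same $\sigma_2$ built from arc-removal (possibly reversed) followed by the $V_2$-ordering, and the same disposal of the $E_1$--$E_1$, $E_2$--$E_2$, $E_1$--$E_2$ and $E_1$--$E_{12}$ crossings via Lemma~\ref{lemma:outerplanar_perm} and Claims~\ref{claim:arcremovalworks}--\ref{claim:reversalworks}. But the part you defer --- the $E_{12}$--$E_{12}$ and $E_{12}$--$E_2$ crossings, and in particular whether a single uniform choice of $p$ and of the reversal works --- is exactly the substance of the lemma, and ``contiguity of neighbourhoods plus planarity pushes everything into a checkable short list'' is not a proof of it. The point you flag as delicate does not resolve itself: the construction only works because of choices you have not made. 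The paper fixes the cut point of the Hamiltonian cycle of $[V_1]$ very carefully (the vertex $s_1^1$ is the $V_1$-endpoint of the first $E_{12}$-edge met counter-clockwise from $s_1^2x$ that avoids $s_1^2$), traverses $[V_1]$ clockwise and $[V_2]$ counter-clockwise so that all $E_{12}$-edges not incident to the two start vertices nest (rather than alternate) in $\sigma_1$, and then branches on whether the edge $e'$ (first $E_{12}$-edge clockwise from $s_1^1s_1^2$) is incident to $s_1^2$ or to $s_1^1$. In the first case there are in fact no $E_{12}$--$E_{12}$ crossings in $\sigma_1$ at all and $\mathrm{arc\mbox{-}removal}(\cdot,r)$ suffices; in the second case the edges from $s_1^1$ to $s_2^2,\dots,s_k^2$ genuinely cross other $E_{12}$-edges in $\sigma_1$, and these are only repaired because $\mathrm{reversal}(\mathrm{arc\mbox{-}removal}(\cdot,l))$ places $s_1^1$ immediately before $s_1^2$ in $\sigma_2$. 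So the parameter choice is not ``tuned to suit the graph'' in some unspecified way --- it is dictated by this case analysis, which in turn depends on the initial alignment of the two cyclic orders; without that alignment an arbitrary outer-face ordering $\sigma'$ of $V_1$ can leave $E_{12}$--$E_{12}$ alternations that neither value of $p$ repairs.

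You are also missing the (easy but necessary) verification that $\sigma_2$ creates no new bad pairs: in $\sigma_2$ an $E_{12}$-edge can cross an $E_2$-edge only if that $E_2$-edge is $s_1^2s_{n_2}^2$, whose endpoints are cyclically consecutive in $\sigma_1$, so any such pair is already separated there; and one must check that the $A$--$B$ pairs resolved in one permutation do not re-cross in the other (the paper keeps $s_1^1$ next to $s_1^2$ in both orderings for exactly this reason). As written, your argument proves the lemma only up to the hardest of the six crossing types, so it has a genuine gap rather than being an alternative route.
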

\begin{proof}
Let $s_1^2$ be a vertex in $V_2$ that has a neighbour $x$ in $V_1$ (such a vertex exists, else $G$ is disconnected). We name $s_1^2$ as the \textit{start vertex} of $G$.\\ 
\textit{Choosing $s_1^1$:} Upon moving counter-clockwise from $x$ on the outer boundary of $[V_1]$, let $e$ (see Figure \ref{fig:2-layers}) be the first edge of $E_{12}$ that is seen after $s_1^2x$ that does not have an endpoint in $s_1^2$.  Such an edge exists, else all vertices of $V_1$ are adjacent to $s_1^2$ and no other vertex of $V_2$, which contradicts the maximality of $G$ (note that $n_2\geq 3$ as $V_2$ are the vertices of the outer face, see Figure \ref{fig:2-layers}). 
Let the endpoint of $e$ in $V_1$ be $s_1^1$ (it could very well be possible that $x=s_1^1$). Note that $s_1^1s_1^2$ is an edge due to the maximality of $G$.\\
\noindent\textit{Defining $e'$:} Upon moving clockwise from $s_1^1$ on the outer face of $[V_1]$, let $e'$ be the first edge of $E_{12}$ that is seen after $s_1^1s_1^2$. Then the edge $e'$ either has $s_1^2$ or $s_1^1$ as an endpoint due to the maximality of $G$.\\

Let $s_1^1,s_2^1,...,s_{n_1}^1$ be the vertices of $[V_1]$ traversed in a clockwise order on its boundary (it is useful to know that a biconnected outerplanar graph is Hamiltonian~\cite{Syslo}). Similarly, let $s_1^2,s_2^2,...,s_{n_2}^2$ be the vertices of $[V_2]$ traversed in an anti-clockwise order on its boundary. Depending on whether $e'$ is incident  with $s_1^1$ or $s_1^2$, consider  the first circular permutation $\sigma_1= (s_1^2,s_2^1,s_3^1,s_4^1,...,s_{n_1}^1,s_1^1,s_2^2,...,s_{n_2}^2)$ or $\sigma_1= (s_1^2,s_1^1,s_2^1,s_3^1,...,s_{n_1}^1,s_2^2,...,s_{n_2}^2)$. We now observe which pair of edges from $E_1,E_2,E_{12}$ can or cannot cross in both the cases of $\sigma_1$.\\

\par \noindent \textit{$E_1-E_1$ crossing, $E_2-E_2$ crossing, $E_1-E_2$ crossing:} It can be observed that in $\sigma_1$ there is no pairwise crossing of edges that are both taken from $E_i$, since $[V_i]$ is 1-outerplanar and thus applying Lemma \ref{lemma:outerplanar_perm}. Clearly, no edge of $E_1$ crosses any edge of $E_2$ in $\sigma_1$.\\

\begin{figure}[t]
\begin{center}
    \subfigure[Existence of $e$ and $s_1^1$.\label{fig:2-layers}]{\includegraphics[width=4cm]{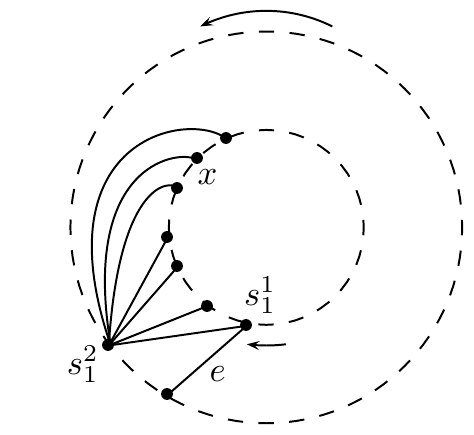}}
    \hspace{3cm}
    \subfigure[Edges of $B$ do not cross each other in $\sigma_1$.\label{fig:greengreen}]{\includegraphics[width=4cm]{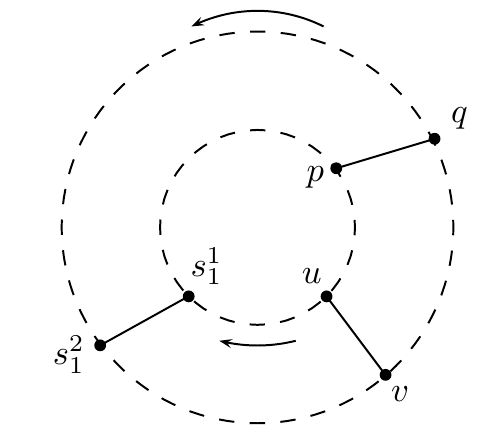}}
    \caption{The edges between the layers $V_1$ and $V_2$.}
    \label{fig:twolayers}
\end{center}
\end{figure}

Now, we produce  two pairwise suitable permutations based on two cases of $e'$ as follows:\\
\par \noindent \underline{Case 1: $e'$ has $s_1^2$ as an endpoint} (see Figure \ref{fig:basecasecase1}).\\
\indent Let $\alpha=(s_2^1,s_3^1,s_4^1,\dots,s_{n_1}^1,s_1^1)$. We define $\sigma_1= (s_1^2,\alpha,s_2^2,s_3^2,\dots,s_{n_2}^2)$ and $\sigma_2= (arc\mbox{-}removal(\alpha,r)$ $,s_1^2$ $,s_2^2,\dots,s_{n_2}^2)$. We now justify that $\sigma_2$ is a permutation that resolves crossing edges in $\sigma_1$.\\ 
\textit{$E_1-E_{12}$ crossing:} The crossings between edges of $E_1$ and $E_{12}$ are resolved by performing the arc removal technique on $\alpha$ (see Claim \ref{claim:arcremovalworks}). \\
\textit{$E_{12}-E_2$ crossing:} It is not possible that an edge from $E_{12}$ crosses an edge from $E_2$ in $\sigma_2$ except for $s_1^2s_{n_2}^2$, due to the planarity of $G$. The crossing of edges with $s_1^2s_{n_2}^2$ is already resolved in $\sigma_1$.\\ 
\textit{$E_{12}-E_{12}$ crossing:} It is not possible for any edge to cross $s_1^1s_1^2$ as $s_1^1$ and $s_1^2$ appear consecutively in both $\sigma_1$ and $\sigma_2$. Let $A$ denote the set of edges from $E_{12}$ that are incident on $s_1^2$ except for the edge $s_1^1s_1^2$. Then there exists an integer $k$ such that the endpoints of the edges of $A$ in $V_1$ are $A'=\{s_2^1,s_{3}^1,\dots,s_{k-2}^1,s_{k-1}^1,s_{k}^1,\}$ (it might very well be possible that $k=2$).\\ 
It is not possible for an edge from $A$ to cross an edge from $B= E_{12} \setminus (A\cup \{s_1^1s_1^2\})$ in $\sigma_1$ as all end points of edges in $A$ appear together.  We observe that no pair of edges from $B$ cross each other in $\sigma_1$. This is because of the following: suppose $pq,uv \in B$ and $p,u \in V_1$ and $q,v\in V_2$. Without loss of generality, we can assume $\sigma_1(p)<\sigma_1(u)$ (see Figure \ref{fig:greengreen}). 
This implies that $\sigma_1(v)<\sigma_1(q)$, else $pq$ and $uv$ cross in the planar embedding of $G$. Since neither $q$ nor $v$ can be $s_1^2$, the vertices $p,u$ appear before $v,q$ in $\sigma_1$. Hence we have $\sigma_1(p)<\sigma_1(u)<\sigma_1(v)<\sigma_1(q)$, thus separating $pq$ and $uv$ in $\sigma_1$.\\

\begin{figure}[t]
\begin{center}
    \subfigure[$e'$ has $s_1^2$ as an endpoint.\label{fig:basecasecase1}]{\includegraphics[width=6cm]{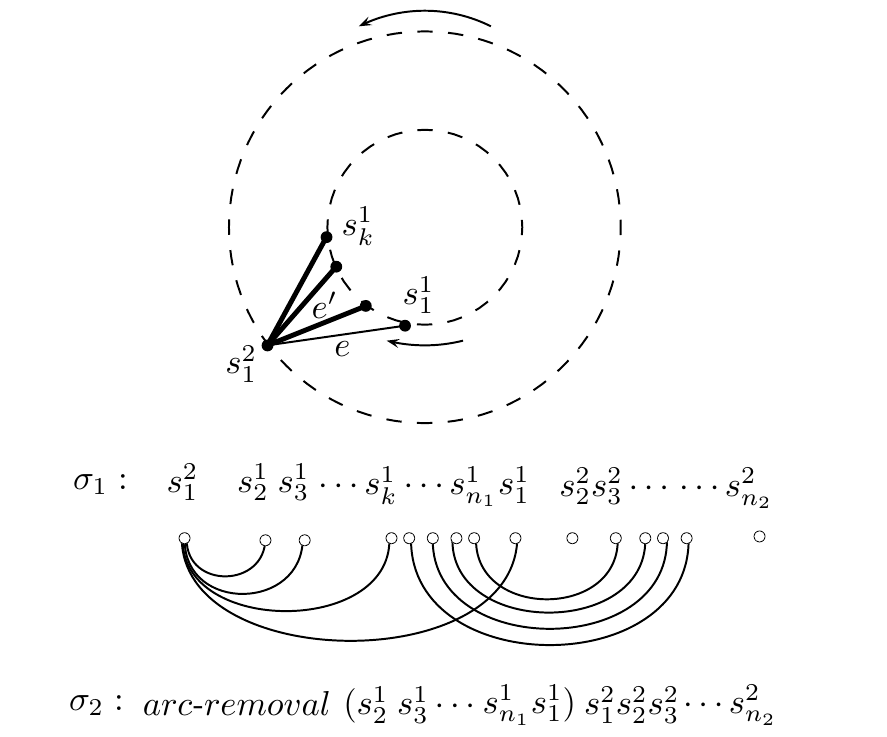}}
    \hspace{2cm}
    \subfigure[$e'$ has $s_1^1$ as an endpoint.\label{fig:basecasecase2}]{\includegraphics[width=6cm]{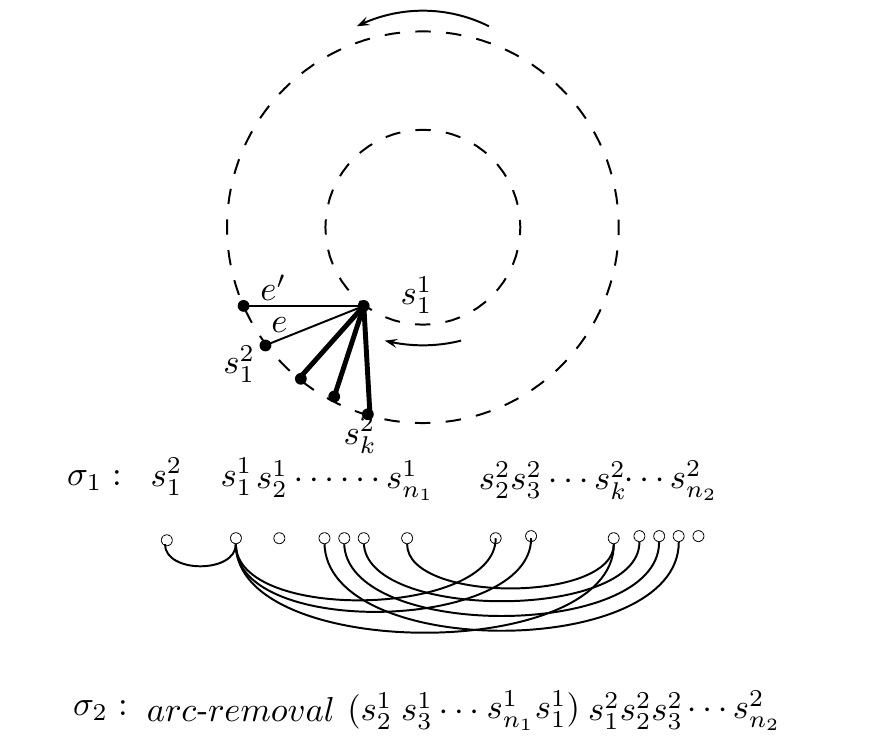}}
    \caption{Pairwise suitable family when $[V_1]$ is biconnected (edges of set $A$ are marked in bold, $arc\mbox{-}rem$ and $rev$ denote arc-removal and reversal respectively).}
    \label{fig:basecase}
\end{center}
\end{figure}

\noindent\underline{Case 2: $e'$ has $s_1^1$ as an endpoint} (see Figure \ref{fig:basecasecase2}).\\
\indent For this case, let $\alpha=(s_1^1,s_2^1,s_3^1,\dots,s_{n_1}^1)$. We define $\sigma_1= (s_1^2,\alpha,s_2^2,s_3^2,\dots,s_{n_2}^2)$.
Define $\sigma_2= (reversal$ $(arc\mbox{-}removal(\alpha,l)),s_1^2$ $,s_2^2,\dots,s_{n_2}^2)$. We now justify that $\sigma_2$ is a permutation that resolves all of the crossing edges in $\sigma_1$.\\
\textit{$E_{12}-E_2$ crossing:} The same reason as Case 1 holds.\\
\noindent \textit{$E_{12}-E_{12}$ crossing:} On moving counter-clockwise on the boundary of $[V_2]$ starting from $s_1^2$, let $s_k^2$ be the first vertex that has a neighbour in $V_1$ other than $s_1^1$ (it might very well be possible that $k=2$). Due to maximality of $G$, $s_1^1$ is adjacent to $s_k^2$. Let $A$ denote the set of edges between $s_1^1$ and $s_2^2,s_3^2,\dots,s_k^2$. Just as in Case 1, no pair of edges from $B = E_{12} \setminus (A\cup \{s_1^1s_1^2\})$ cross each other in $\sigma_1$. Thus the only pair of edges from $E_{12}$ that can cross is an edge from $A$ and an edge from $B$. These crossings are resolved by shifting $s_1^1$ to just before $s_1^2$ as given in $\sigma_2$. We retain this position of $s_1^1$ to prevent these edges from crossing again.\\
\textit{$E_{1}-E_{12}$ crossing:} These crossings are resolved by performing the arc removal technique on $\alpha=(s_1^1,s_2^1,\dots,$ $s_{n_1}^1)$ by fixing the left end and finally reversing this permutation.

\end{proof}
%--------- Connected-----------------
\begin{lemma}\label{lemma:connected}
Let $G$ be a maximal $2$-outerplanar graph. If the graph induced on $V_1$ is connected, then $\pi^\circ(G) = 2$.
\end{lemma}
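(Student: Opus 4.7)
The approach is to extend the framework of Lemma \ref{lemma:biconnected} so that it accommodates a non-biconnected $[V_1]$. The main obstruction is that when $[V_1]$ has cut vertices, it no longer admits a Hamiltonian cycle, so the starting permutation $\sigma'$ of $V_1$ cannot be obtained directly from Lemma \ref{lemma:outerplanar_perm}. However, the rest of the machinery, namely arc-removal (Claims \ref{claim:arcremovalworks} and \ref{claim:reversalworks}), the layered construction of $\sigma_1$, and the planarity-based analysis of $E_{12}$-$E_{12}$ crossings, only requires that $\sigma'$ separate all non-adjacent edges of $[V_1]$ and respect the planar embedding in a suitable sense. The plan is to build such a $\sigma'$ from the block-cut tree of $[V_1]$.

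Concretely, I would first form the block-cut tree $T$ of $[V_1]$ and root it at the block $B_0$ containing a designated vertex $s_1^1$ chosen exactly as in Lemma \ref{lemma:biconnected}. Each block is a biconnected outerplanar subgraph, and by Lemma \ref{lemma:outerplanar_perm} its vertices can be arranged cyclically along its Hamiltonian cycle. I would then traverse $T$ in DFS order, using at every cut vertex the cyclic order of children induced by the planar embedding of $G$. The ordering $\sigma'$ lists every vertex of $V_1$ at its first visit: within each block one follows the outer-face cycle of that block, and upon meeting a cut vertex $v$ one recurses into all subtrees of $T$ hanging off $v$ in embedding-order before resuming the parent block's cycle. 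Because distinct blocks occupy either disjoint contiguous intervals or properly nested ones in $\sigma'$, a short case analysis shows that $\sigma'$ separates every pair of non-adjacent edges in $[V_1]$: edges inside one block are separated by Lemma \ref{lemma:outerplanar_perm} applied to the sub-permutation on that block, while edges in different blocks lie in intervals that cannot alternate cyclically.

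With $\sigma'$ in hand, I would choose the start vertex $s_1^2 \in V_2$ and the edge $e'$, and define $\sigma_1$ and $\sigma_2$ exactly as in Cases 1 and 2 of Lemma \ref{lemma:biconnected}, applying Claim \ref{claim:arcremovalworks} or Claim \ref{claim:reversalworks} to $\sigma'$. The verification of $E_1$-$E_1$, $E_2$-$E_2$, $E_1$-$E_2$ and $E_1$-$E_{12}$ crossings transfers essentially line-by-line from Lemma \ref{lemma:biconnected}, since the proof of Claim \ref{claim:arcremovalworks} uses only that $\sigma'$ separates the edges in $[V_1]$ and not that $[V_1]$ is biconnected.

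The hard part will be the $E_{12}$-$E_{12}$ analysis, which in Lemma \ref{lemma:biconnected} leaned on the fact that a single Hamiltonian cycle of $[V_1]$ sweeps around the whole inner region in a way compatible with the cyclic arrangement of $V_2$. With cut vertices present, an $E_{12}$ edge anchored in a deep block and one anchored in a shallow block could a priori alternate along $\sigma_1$ even though they do not cross in the planar embedding. To rule this out, I would argue that the embedding-ordered DFS used to build $\sigma'$ realizes the cyclic order in which the $E_{12}$ edges leave the inner region toward $V_2$: if two $E_{12}$ edges $pq$ and $uv$ with $p,u \in V_1$ and $q,v \in V_2$ avoid $s_1^1$ and $s_1^2$, planarity forces the interval of $\sigma'$ between $p$ and $u$ to correspond to a sub-arc of $V_2$ disjoint from $q$ and $v$, so the four vertices appear along $\sigma_1$ without alternation, exactly as in the argument around Figure \ref{fig:greengreen}. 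Pinning down this embedding-compatibility at every cut vertex, and in particular ensuring that the DFS child order matches the counter-clockwise cyclic order of attachments to $V_2$, is the crux of the argument.
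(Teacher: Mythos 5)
There is a genuine gap, and it sits exactly at the point you yourself flag as the crux. Your plan for the $E_{12}$--$E_{12}$ case is to show that any two such edges avoiding $s_1^1$ and $s_1^2$ already fail to alternate in $\sigma_1$, ``exactly as in the argument around Figure~\ref{fig:greengreen}''. That claim is false once $[V_1]$ has a cut vertex. The biconnected argument uses that each vertex of $V_1$ occurs exactly once on the outer cycle of $[V_1]$, so its $V_2$-neighbourhood is a single arc and these arcs proceed monotonically (in reverse) along the $V_2$ part of $\sigma_1$. A cut vertex $c$, however, is visited twice by the outer-face walk of $[V_1]$ but is listed only once (first occurrence), while its $V_2$-neighbours form two arcs lying on either side of the arcs attached to the block hanging off $c$. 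Concretely, let $[V_1]$ be two triangles $abc$ and $cde$ sharing $c$, walked in the order $a,b,c,d,e$; by maximality $c$ has a $V_2$-neighbour $q$ in the arc met at its second visit (between the attachments of $e$ and of $a$), and $d$ has a $V_2$-neighbour $m$ in its own arc. In $\sigma_1$ the cyclic order of these four vertices is $c,d,q,m$, so the edges $(c,q)$ and $(d,m)$ alternate, i.e.\ cross, in $\sigma_1$, although none of the four vertices is $s_1^1$ or $s_1^2$. Hence such crossings must be resolved by the second permutation, and your construction provides no mechanism guaranteed to do so: a single global application of arc-removal to the whole DFS order is designed (Claims~\ref{claim:arcremovalworks} and~\ref{claim:reversalworks}) to unfold the $E_1$ arcs, and nothing in it forces every cut vertex to end up on the correct side of all vertices of the blocks hanging off it in $\sigma_2$.

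The paper handles exactly this situation by a different construction: induction on the number of blocks of $[V_1]$. The last block $B_j$, which meets the rest in a single cut vertex $s_c^1$, is contracted to obtain a smaller graph $G'$; the two permutations of $G'$ are taken by induction, and $\sigma_2$ is built by replacing $s_c^1$ in $\sigma_2'$ with $reversal(arc\mbox{-}removal(\alpha,l))$ of the block, where $\alpha$ begins with $s_c^1$. This deliberately places $s_c^1$ after all other vertices of $B_j$ in $\sigma_2$, which is precisely what resolves crossings of the type $(s_c^1,s_k^2)$ versus $(s_l^1,s_m^2)$ with $s_l^1\in B_j$ described above (the situation of Figure~\ref{fig:cut_vertex_case}). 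If you wish to keep your one-shot construction, you must additionally prove that your global $\sigma_2$ has this ``cut vertex after its descendant block'' property simultaneously for every cut vertex, an argument you have not supplied and which the arc-removal procedure by itself does not obviously deliver.
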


\begin{proof}
We consider that case where $[V_1]$ is not biconnected. We follow a technique of listing the vertices that is almost identical to the one in Lemma \ref{lemma:biconnected}. Let $s_1^2$ be a vertex in $V_2$ that has a neighbour $x$ in $V_1$.\\
\noindent \textit{Choosing $s_1^1$, and defining $e'$:} This is done  as in Lemma \ref{lemma:biconnected}.\\
\textit{Labelling vertices and blocks:} Consider a clockwise walk along the outer face of $[V_2]$ starting from $s_1^2$. As we move along the outer face, we see one set of endpoints of edges in $E_{12}$. Consider the order in which the corresponding endpoints in $V_1$ are seen, which would also be clockwise along the outer face of $[V_1]$. We label the vertices of $V_1$ starting from $s^1_1$ in this order. We preserve the first occurrence of cut vertices and ignore repeated instances. Hence, in order to obtain the first permutation $\sigma_1$, we traverse clockwise along the boundary of the outer face of $[V_1]$ and in a counter-clockwise manner along the boundary of the outer face of $[V_2]$ (as in the previous lemma). 
We now aim to construct the second permutation through repeated contraction of blocks.\\

\begin{figure}[t]
\centering
\includegraphics[width=0.5\textwidth]{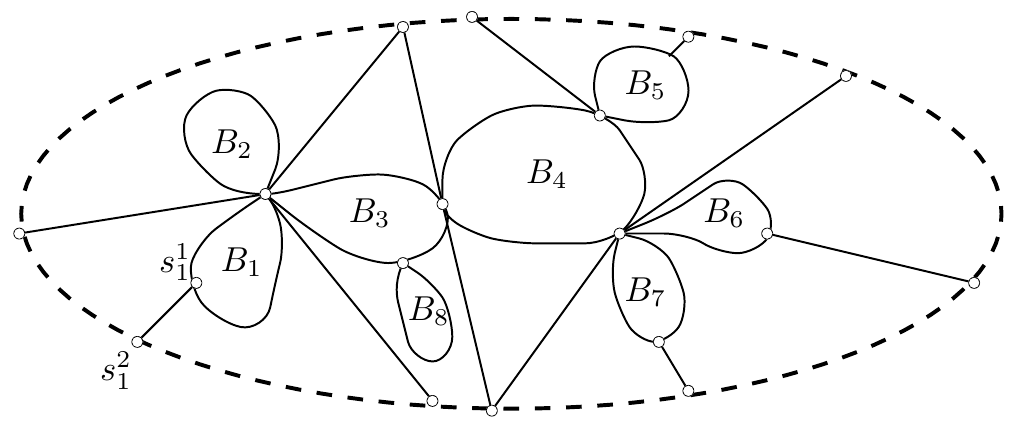}
\caption{Edges of $B$ do not cross each other in $\sigma_1$.}
\label{fig:cutvertex}
\end{figure}

\par Label each block as $B_i$ seen in the order of the clockwise walk taken on the outer boundary of $[V_1]$ (see Figure \ref{fig:cutvertex}). 
We claim that blocks of $[V_1]$ can be contracted in a certain manner to obtain a biconnected outerplanar graph. The block $B_i$ labelled last (\textit{i.e.} the block with the largest label $i$) can share at most one cut vertex with other blocks. Let $G_{i-1}$ be the graph obtained upon contracting $B_i$ into the cut vertex, (let the vertex remaining after contraction retain the label of the cut vertex). Hence, the inner layer of $G_{i-1}$ induces a maximal outerplanar graph with $i-1$ blocks, where $B_{i-1}$ is a block that shares at most one cut vertex with the remaining blocks. We continue contracting successively (producing graphs $G_k$ with $k$ blocks) till we obtain a maximal 2-outerplanar graph $G_1=G'$ with $B_1$ as the vertex set of its inner layer. Thus the inner layer induces a single biconnected component $[B_1]$. Hence, $G'$ satisfies the constraints of Lemma \ref{lemma:connected}.
Let $\sigma_2'$ be the second permutation of $V(G')$ according to Lemma \ref{lemma:connected}.  
We now obtain two permutations for $V(G)$: the first from the walk described above and the second from $\sigma_2'$.\\

\par \noindent \textit{Constructing $\sigma_1$:} We define $\sigma_1= (s_1^2,$ \textit{permutation of} $V_1, s_2^2, s_3^2, \dots, s_{n_2}^2)$ where the permutation of $V_1$ is listed according to the walk described above while labelling the vertices. It is to be noted that this walk either begins with $s_2^1$ or $s_1^1$ based on whether $e'$ has $s_1^2$ as en endpoint or $s_1^1$ as an endpoint respectively (see Lemma \ref{lemma:biconnected}).\\
\textit{Constructing $\sigma_2$:} Consider the graph $G_2$. Let $s_c^1$ be the cut vertex shared by $B_2$ and $B_1$. Suppose in $\sigma_1$, vertices of $B_2$ are seen in the relative order $s_c^1,s_b^1,s_{b+1}^1,s_{b+2}^1,\dots,s_{b+|B_2|-2}^1$, where $s_b^1$ is the first vertex of $B_2$ seen after $s_c^1$. Let $\alpha=(s_c^1,s_b^1,s_{b+1}^1,$ $s_{b+2}^1,\dots,s_{b+|B_2|-2}^1)$. We construct the second permutation by replacing $s_c^1$ in $\sigma_2'$ by $reversal(arc\mbox{-}removal(\alpha,l))$. Hence, to construct a permutation of $V(G_i)$, we replace the cut vertex in $B_i$ by a permutation of block $B_{i}$ in a similar manner. We continue inserting such permutations of blocks until we obtain a permutation of $V(G)$, which we fix as the second permutation $\sigma_2$.\\ 

\par We prove the lemma by induction on the number of blocks of $[V_1]$. If $[V_1]$ has only one block (and hence no cut vertex), it is easy to see that the permutations of $G$ are as obtained from the previous lemma. Suppose the lemma holds for all maximal 2-outerplanar graphs whose inner layer induces at most $j-1$ blocks. Let $G$ be a maximal 2-outerplanar graph whose inner layer induces exactly $j$ blocks. Let $s_1^2$ be a start vertex of $G$. We construct $\sigma_1$ as explained earlier. We find the last block $B_j$ containing only one cut vertex, say $s_c^1$, and contract $B_j$ to produce the graph $G'$. Let the vertex set of the inner layer of $G'$ be $B'$. Since the induction hypothesis can be applied to $G'$, let $\sigma^{'}=\{\sigma_1^{'},\sigma_2^{'}\}$ be the two permutations separating the edges of $G'$ as described above, with $s_1^2$ as the start vertex. Let $\sigma_2$ be produced from $\sigma_2'$ after inserting a suitable permutation of $B_j$ (refer construction of $\sigma_2$) in the second permutation. Let $\sigma=\{\sigma_1,\sigma_2\}$. We now prove that $\sigma$ separates the edges of $G$. It is easy to see that $\sigma=\sigma'$ if we ignore the vertices of $B_j \setminus \{s_c^1\}$ in $\sigma$. Hence, edges of $G$ that are separated in $\sigma'$ are separated in $\sigma$.\\ 

\par \noindent \textit{$E_1-E_1$ crossing, $E_2-E_2$ crossing, $E_1-E_2$ crossing:} It can be observed that in $\sigma_1$ there is no pairwise crossing of edges that are both taken from $E_i$, since $[V_i]$ is 1-outerplanar and thus applying Lemma \ref{lemma:outerplanar_perm}. Clearly, no edge of $E_1$ crosses any edge of $E_2$ in $\sigma_1$.\\

\noindent Note that all the vertices of $V_2$ which are adjacent to a vertex in $B_j$, are adjacent to $s^1_c$ in $G'$. Let $s_b^1$ be the first vertex of $B_j$ seen after $s_c^1$ in $\sigma_1$. In $\sigma_1$, let the vertices of $B_j$ be seen in the order $s_c^1,\dots,s^1_{b}, s^1_{b+1},\dots,s^1_{b + |B_j| - 2}$. Consider $\alpha = (s_c^1,s^1_{b}, s^1_{b+1},\dots,s^1_{b + |B_j| - 2})$. To construct $\sigma_2$ we replace $s_c^1$ in $\sigma'_2$ by $reversal(arc\mbox{-}removal(\alpha,l))$. We now discuss the rest of the crossings.\\
\par \noindent \textit{$E_1-E_{12}$ crossing:} The crossings between edges of $E_1$ and $E_{12}$ are resolved by performing the arc removal technique on $\alpha$ (see Claim \ref{claim:reversalworks}). \\
\textit{$E_{12}-E_2$ crossing:} It is not possible that an edge from $E_{12}$ crosses an edge from $E_2$ in $\sigma_2$ except for $s_1^2s_{n_2}^2$, due to the planarity of $G$. The crossing of edges with $s_1^2s_{n_2}^2$ is already resolved in $\sigma_1$.\\ 
\textit{$E_{12} - E_{12}$ crossing:} When we consider the graph $G'$, any crossing involving an edge incident on $s^1_c$ is resolved in one of the two permutations $\sigma'_1$ or $\sigma'_2$. Hence it is clear that, an edge from $E_{12}$ with one endpoint on $B'\setminus \{s^1_c\}$ and an edge from $E_{12}$ with one endpoint in $B_j$ do not cross in either $\sigma_1$ or $\sigma_2$. Also, a pair of edges from $E_{12}$ with one end point each in $B_j\setminus \{s^1_c\}$ do not cross each other (as seen similarly in Lemma \ref{lemma:biconnected}). So we only have to consider the case when (see Figure \ref{fig:cut_vertex_case}) $e \in E_{12}$ such that $e=(s^1_c,s^2_k)$ for some $s_k^2\in V_2$ and $f = (s^1_l, s^2_m)$ where $s_l^1\in B_j,s_m^1\in V_2$. If $e$ and $f$ cross each other in $\sigma_1$, then they don't cross each other in $\sigma_2$ as $s^1_c$ appears after all vertices of $B_j$. \\
\end{proof}

\begin{figure}[t]
\begin{center}
    \subfigure[Edges $e,f$ crossing in $\sigma_1$.\label{fig:cut_vertex_green_green}]{\includegraphics[width=5cm]{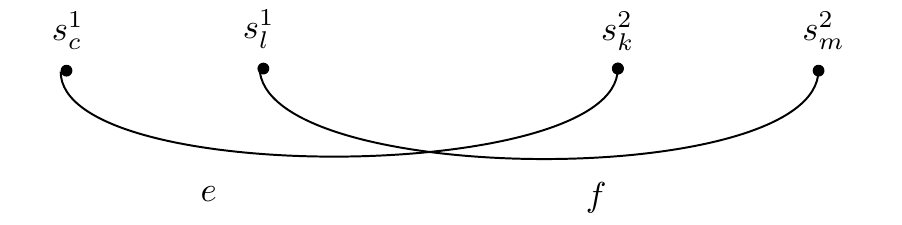}}
    \hspace{2cm}
    \subfigure[Edges $e,f$ not crossing in $\sigma_2$.\label{fig:cut_vertex_green_green_not_crossing}]{\includegraphics[width=5cm]{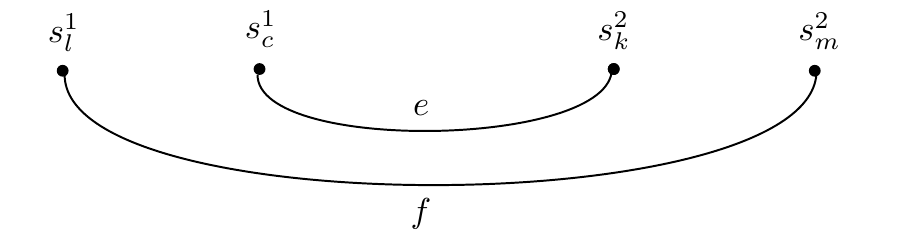}}
    \caption{$E_{12}-E_{12}$ crossing}
    \label{fig:cut_vertex_case}
\end{center}
\end{figure}

\begin{remark}
Hence, two permutations are enough to separate the edges of $G$ when $[V_1]$ contains exactly one component. It should be noted that in both permutations, the vertices of $V_2$ appear relatively according to the permutation described by the outerplanarity of $[V_2]$. It should also be noted that $s_1^2$ appears first in $\sigma_1$ and appears just after all vertices of $V_1$ in the second permutation.
\end{remark}

%-------------Main Theorem-----------
\begin{theorem}
The circular separation dimension of a maximal $2$-outerplanar graph is two.
\end{theorem}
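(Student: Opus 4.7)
The theorem extends Lemma \ref{lemma:connected} from the case of a connected inner layer to the general case, so the only remaining situation is when $[V_1]$ is disconnected. The plan is to induct on the number $t$ of connected components of $[V_1]$, with Lemma \ref{lemma:connected} serving as the base case $t=1$.

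The key structural observation is the following. Since $G$ has a fixed planar embedding and $[V_2]$ is (assumed without loss of generality) biconnected, the set of vertices in $V_2$ adjacent to any fixed component $C$ of $[V_1]$ forms a contiguous arc on the outer face of $[V_2]$, and distinct components of $[V_1]$ occupy disjoint planar ``pockets'' delimited by these arcs. Consequently, two edges of $E_{12}$ with inner endpoints in different components of $[V_1]$ can never alternate in a circular ordering that traverses the outer face of $[V_1]$ clockwise and the outer face of $[V_2]$ counter-clockwise; they are either nested or fully separated. This is the $E_{12}$--$E_{12}$ analogue of the planarity argument already used in Lemma \ref{lemma:biconnected} (see Figure \ref{fig:greengreen}) and Lemma \ref{lemma:connected} (see Figure \ref{fig:cutvertex}), generalised to multiple components.

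For the inductive step, let $C_1,\ldots,C_t$ be the components of $[V_1]$ enumerated in the clockwise order in which they appear on the outer face of $[V_1]$, starting from the one containing the vertex $s_1^1$ chosen as in Lemma \ref{lemma:connected}. Remove the last component $C_t$ (after contracting the attaching arc of $V_2$-neighbours appropriately, or by simply deleting $C_t$ and re-triangulating the exposed face to maintain maximality) to obtain a maximal $2$-outerplanar graph $G'$ whose inner layer has $t-1$ components. By the induction hypothesis, we obtain two pairwise suitable permutations $\sigma_1',\sigma_2'$ for $G'$. We then construct $\sigma_1$ by inserting the vertices of $C_t$ into $\sigma_1'$ at the clockwise position dictated by the outer face walk of $[V_1]$, and $\sigma_2$ by inserting a permutation of $C_t$ obtained by applying Lemma \ref{lemma:connected} (and the arc-removal / reversal construction of Claims \ref{claim:arcremovalworks} and \ref{claim:reversalworks}) to the induced subgraph on $V_2 \cup V(C_t)$, placed in the position of the attachment arc.

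The main obstacle will be the verification of $E_{12}$--$E_{12}$ crossings, specifically between edges incident to $C_t$ and edges incident to some $C_i$ with $i<t$: the structural observation above rules these out in $\sigma_1$, and the insertion positions in $\sigma_2$ are chosen so that the entire block of $V(C_t)$ lies inside the arc of $V_2$-neighbours of $C_t$, again preventing interleaving. Crossings entirely within $C_t$ (the cases $E_1$--$E_1$, $E_1$--$E_{12}$, $E_{12}$--$E_{12}$ on $C_t$) are resolved exactly as in Lemma \ref{lemma:connected} applied to the subgraph on $V_2\cup V(C_t)$; crossings entirely within $G'$ are handled by $\sigma_1',\sigma_2'$ via the induction hypothesis, since ignoring $V(C_t)$ in $\sigma_1,\sigma_2$ gives $\sigma_1',\sigma_2'$. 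Finally, $E_{12}$--$E_2$ crossings are disposed of by the same planarity argument as in the previous lemmas, and $E_i$--$E_i$ and $E_1$--$E_2$ crossings do not occur by Lemma \ref{lemma:outerplanar_perm}.
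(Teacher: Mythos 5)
Your overall strategy (induction on the number of components of $[V_1]$, building on Lemma~\ref{lemma:connected}) matches the paper, but the step carrying all the difficulty rests on a false structural claim. It is not true that the $V_2$-neighbourhood of a component of $[V_1]$ is a contiguous arc of the outer cycle, nor that edges of $E_{12}$ with inner endpoints in distinct components can never alternate in the first permutation. Take $[V_2]$ to be the $6$-cycle $s_1s_2\dots s_6$ with chords $s_1s_3$, $s_3s_5$, $s_5s_1$, and let $V_1=\{v_1,v_2,v_3,v_c\}$ where $v_1$ is adjacent to $s_1,s_2,s_3$, $v_2$ to $s_3,s_4,s_5$, $v_3$ to $s_5,s_6,s_1$, and $v_c$ to $s_1,s_3,s_5$. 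This is a maximal $2$-outerplanar graph, $[V_1]$ has four components, $v_c$ attaches to the non-contiguous set $\{s_1,s_3,s_5\}$, and the components of $v_1$, $v_3$, $v_c$ all share the attachment vertex $s_1$. In any first permutation of the form $(s_1,\ast,\ast,\ast,s_2,s_3,v_2,s_4,s_5,s_6)$, where $\ast,\ast,\ast$ is any ordering of $v_1,v_3,v_c$ inserted at their attachment position, some pair of non-adjacent $E_{12}$ edges from different components alternates (for instance, with the order $v_3,v_c,v_1$ the edges $v_1s_1$ and $v_cs_5$ alternate). So the cross-component $E_{12}$--$E_{12}$ crossings you declare impossible do occur; moreover your prescription for $\sigma_2$ (place the block $V(C_t)$ ``inside the arc of $V_2$-neighbours of $C_t$'') is undefined when that neighbourhood is not an arc and, in any case, contains no mechanism for separating an edge into a shared attachment vertex from the edges of another component hanging at that same vertex.

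This is precisely the configuration the paper's proof is organised around: it introduces separating chords, cuts $G$ into pockets $R^i$ to which Lemma~\ref{lemma:connected} applies, chooses as start vertex of each pocket a chord endpoint $p_j$ of smallest index, and, crucially, in the second permutation shifts each $p_j$ to the right of all blocks $\alpha_2^i$ inserted immediately after it; that shift is what resolves the residual crossing between an edge incident to $p_t$ from one pocket and an edge of a pocket sharing $p_t$ as its start vertex (Case~(ii) of the paper's $E_{12}$--$E_{12}$ analysis), and your construction has no analogue of it. Your choice of which component to delete in the induction is also looser than the paper's (it removes a pocket bounded by a single separating chord, so $G'$ stays maximal without re-triangulating), but that is repairable; the missing treatment of shared attachment vertices and non-arc attachment sets is the genuine gap.
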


\begin{proof}
\begin{figure}[t]
\centering
\includegraphics[width=0.7\textwidth]{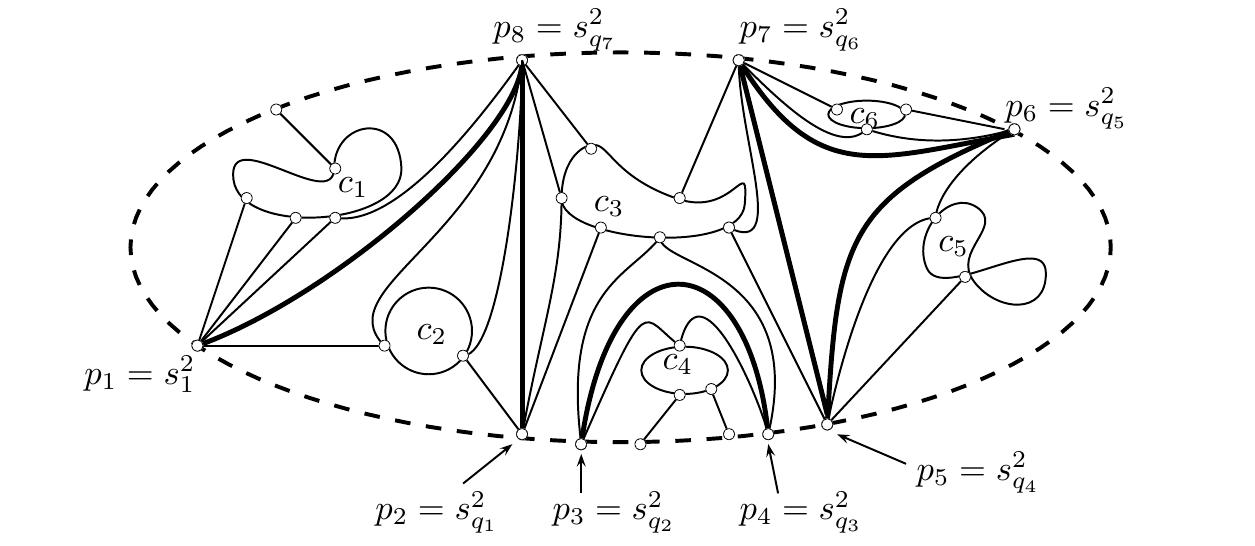}
\caption{Example of a 2-outerplanar graph where each $C^i$ is a component of $[V_1]$. Separating chords are marked in bold.}
\label{fig:components_and_chords}
\end{figure}
 As Lemma \ref{lemma:connected} deals with the case where $[V_1]$ is connected, we consider the case where $[V_1]$ is disconnected. Let each component of $[V_1]$ be $C^i$. We say that an edge $pp'\in E_2$ is a separating chord of $G$ if \textit{(i)} both $p,p'$ have a common neighbour in $V_1$ and if \textit{(ii)} removal of $p$ and $p'$ disconnects $G$ such that every component of $G-\{p,p'\}$ contains a vertex of $V_1$. It is easy to see that each $C^i$ is enclosed within the outer boundary of $[V_2]$ and by separating chords of $G$ (see Figure \ref{fig:components_and_chords}). Let $R^i$ be the smallest induced subgraph of $G$ whose vertices are those of $C^i$ and the vertices of the outer layer bounding $C^i$ including the endpoints of separating chords bounding $C^i$. We label $R^i$'s appropriately later. It is easy to see that each $R^i$ satisfies the conditions of Lemma \ref{lemma:connected}. Let $p_1$ be an endpoint of one separating chord. Let $s_1^2(=p_1),s_2^2,s_3^2,s_4^2,\dots,s_{n_2}^2$ be the ordering of vertices of $[V_2]$ by traversing its outer boundary in a counter-clockwise manner. Let $P=\{p_j\}$ be the set of endpoints of all separating chords such that $p_j$ is the $j^{th}$ endpoint while traversing the outer boundary of $[V_2]$ in a counter-clockwise manner starting at $p_1$. It is to be noted that every $p_j$ is some $s^2$. For any $R^i$ containing some $p_j$, we claim that $p_j$ can be a start vertex (recall that the start vertex has a neighbour in $V_1$). This is due to the definition of separating chords and the maximality of $G$ (as $G$ is a maximal 2-outerplanar graph).\\
\indent Let the start vertex for each $R^i$ be $p_j$ where $R^i$ contains $p_j$ and $j$ is the smallest such integer. Let $\gamma^{i}=\{\gamma_1^{i},\gamma_2^{i}$\} be the pairwise suitable family for the graph $R^i$ according to Lemma \ref{lemma:connected}. 
For $k=1,2$, let $\alpha_k^i$ be the sub-permutation of $\gamma_k^i$ restricted to the $V(C^i)$.\\

\par \noindent \textit{Construction of $\sigma$:} To form the first permutation $\sigma_1$ of $G$, we first list the start vertex $p_1=s_1^2$. We then consider all the neighbours of $p_1$ and see the edges between $p_1$ and its neighbours clockwise from the edge $p_1s_{n_2}^2$ (see Figure \ref{fig:components_and_chords}). The $R^i$'s containing $p_1$ are labelled in the order in which they are seen. We list the $\alpha_1^i$'s corresponding to the labelled $R^i$'s in increasing order of $i$. We then list the vertices of $V_2$ starting from the vertex succeeding $p_1$ (which is $s_2^2$) till we reach $p_2$. Suppose $p_2=s_{q_1}^2$. We perform the same procedure, that is we see all neighbours of $p_2$ in a clockwise manner starting from the edge $p_2s_{q_1-1}^2$ and label the $R^i$'s seen, ignoring those $R^i$ already labelled. We then list the corresponding $\alpha_1^i$'s. 
Thus for the example in Figure \ref{fig:components_and_chords}, $\sigma_1=(p_1=s_1^2,\alpha_1^1,\alpha_1^2,s_2^2,s_3^2,\dots,p_2=s_{q_1}^2,\alpha_1^3,s_{q_1+1}^2,s_{q_1+2}^2,\dots,p_3=s_{q_2}^2,\alpha_1^4,s_{q_2+1}^2,s_{q_2+2}^2,\dots,p_4=s_{q_3}^2,s_{q_3+1}^2,s_{q_3+2}^2,\dots,p_5=s_{q_4}^2,\alpha_1^5,s_{q_4+1}^2,s_{q_4+2}^2,\dots,p_6=s_{q_5}^2,$ $\alpha_1^6,s_{q_5+1}^2$ $,s_{q_5+2}^2,\dots,p_7=s_{q_6}^2,s_{q_6+1}^2,s_{q_6+2}^2,\dots,$ $p_8=s_{q_7}^2,s_{q_7+1}^2,s_{q_7+2}^2,\dots,s_{n_2}^2)$. 
In order to list the second permutation, we replace each $\alpha_1^i$ by $\alpha_2^i$ in $\sigma_1$ and shift each $p_j$ to the right of all the $\alpha_2^i$'s listed immediately after it, where each $\alpha_2^i$ contains at least one vertex adjacent to $p_j$. 
Thus for the example in Figure \ref{fig:components_and_chords}, $\sigma_2=(\alpha_2^1,\alpha_2^2,p_1=s_1^2,s_2^2,s_3^2,\dots,\alpha_2^3,p_2=s_{q_1}^2,s_{q_1+1}^2,s_{q_1+2}^2,\dots,\alpha_2^4,p_3=s_{q_2}^2,s_{q_2+1}^2,s_{q_2+2}^2,\dots,p_4=s_{q_3}^2,s_{q_3+1}^2,$ $s_{q_3+2}^2,\dots,\alpha_2^5,p_5=s_{q_4}^2,s_{q_4+1}^2,s_{q_4+2}^2,\dots,\alpha_2^6,$ $p_6=s_{q_5}^2,s_{q_5+1}^2,s_{q_5+2}^2,\dots,p_7=s_{q_6}^2,s_{q_6+1}^2,$ $s_{q_6+2}^2,$ $\dots,$ $p_8=s_{q_7}^2,s_{q_7+1}^2,s_{q_7+2}^2,\dots,s_{n_2}^2)$.\\

We now prove that these two permutations form a suitable family of permutations for $G$ by induction on the number of components of $[V_1]$. The base case of having one component is already dealt with in Lemma \ref{lemma:connected}. We assume that the induction hypothesis is true for at most $n-1$ components. Suppose $[V_1]$ has exactly $n$ components.\\
Let $P^i$ be the set of all $p_j$'s contained in $R^i$ for each $i$. It is easy to see that there exists a subgraph $R^a$ that is bounded by exactly one separating chord (see Figure~\ref{fig:Ra}). In other words, $\exists a \ni P^a=\{p_t,p_{t+1}=p_{t'}\}$ where the addition is $modulo\ |P|$ (we use $modulo\ |P|$ to accommodate the case where $t=|P|$ and $t'=1$).
We discuss the case when $p_{t'}\neq p_1$. Suppose $p_t=s_{q}^2$ and $p_{t'}=s_{q'}^2$. Let $H$ be the graph induced on the vertex set $V(C^a)\cup \{p_t=s_{q}^2,s_{q+1}^2,s_{q+2}^2,\dots s_{q'-1}^2,s_{q'}^2=p_{t'}\}$. Let $G'$ be the graph obtained from $G$ by deleting all vertices of $H$ except $p_t,p_{t'}$. 
Then $G'$ is a maximal 2-outerplanar graph with exactly $n-1$ components.

\begin{figure}[t]
\centering
\includegraphics[width=0.45\textwidth]{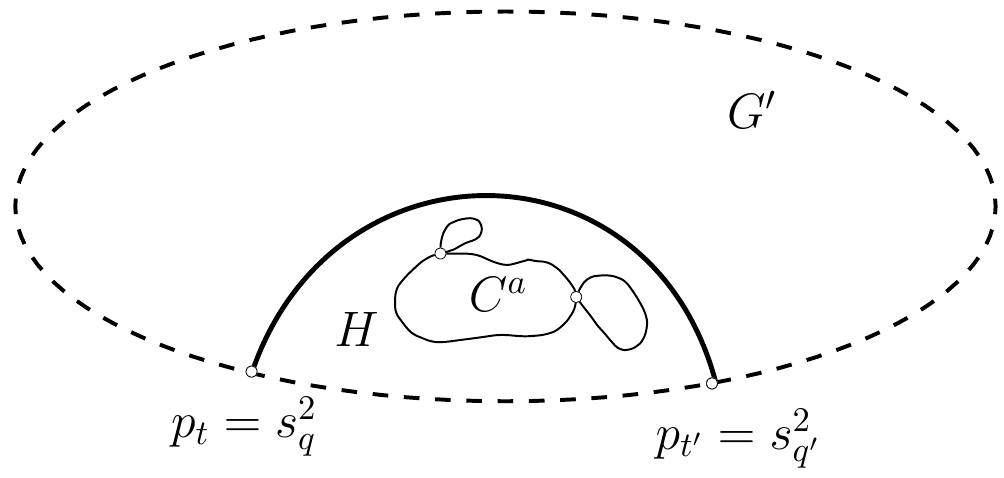}
\caption{$R^a$ with exactly one separating chord $p_tp_{t'}$.}
\label{fig:Ra}
\end{figure}

Let $\sigma'=\{\sigma_1',\sigma_2'\}$ be the pairwise suitable family for $G'$ according to the above construction with $p_1$ as the start vertex (observe that all $p_i$'s of $G$ are retained in $G'$). As defined earlier, let $\gamma^{i}=\{\gamma_1^{i},\gamma_2^{i}$\} be the pairwise suitable family for the graph $R^i$ with $p_t$ as the start vertex. For $k=1,2$, let $\alpha_k^i$ be the sub-permutation of $\gamma_k^{i}$  restricted to $V(C^i)$.\\
Let $\sigma=\{\sigma_1,\sigma_2\}$ be the pairwise suitable family for $G$ obtained from the method of construction described above, with $p_1$ as the start vertex. It is easy to see that the same permutation $\sigma_1$ can be obtained from $\sigma_1'$ by inserting $\alpha_1^a,s_{q+1}^2,s_{q+2}^2,\dots,s_{q'-1}^2$ immediately after all components following $p_t$ while $\sigma_2$ can be obtained from $\sigma_2'$ by inserting $\alpha_2^a$ immediately before $p_t$ and inserting $s_{q+1}^2,s_{q+2}^2,\dots,s_{q'-1}^2$ immediately after $p_t$. (If $p_{t'}=p_1$, then $\alpha_1^a$ appears immediately after $p_1$ in $\sigma_1$, and in $\sigma_2$, $\alpha_2^a$ appears at the very beginning. We observe that $s_{q+1}^2,s_{q+2}^2,\dots,s_{q'-1}^2$ appears immediately after $p_t$ in both $\sigma_1$ and $\sigma_2$.) It is easy to see that the relative ordering of $\gamma_k^{i}$ is maintained in $\sigma_k$ for each $i$ and $k \in \{1,2\}$. We only need to check whether an edge of $H$ can cross with an edge not contained in $H$. Let $R^b$ be a subgraph such that $b \neq a$. We now prove that $\sigma$ is a pairwise suitable family for $G$.\\

\noindent\textit{$E_1 - E_1$, $E_1 - E_2$, $E_2 - E_2$ crossings:} Clearly, an edge of $C^a$ will not cross an edge of $C^b$ since all vertices of $C^a$ appear consecutively. Therefore, a pair of edges from $E_1$ cannot cross each other. Similarly, an edge in $E_1$ cannot cross an edge in $E_2$. Clearly, a pair of edges from $E_2$ cannot cross each other as vertices of $[V_2]$ are listed according to Lemma \ref{lemma:outerplanar_perm}. \\ 
\textit{$E_1 - E_{12}$ crossings:} Suppose $e\in E(C^a)$ and $f\in E_{12} \cap E(R^b)$. Since the vertices of $C^a$ appear consecutively in both permutations, it is not possible for $f$ to cross $e$. If $e\in E(C^b)$ and $f\in E(R^a)$, the same reason holds as vertices of $C^b$ are written consecutively in both permutations.\\
\textit{$E_{2}-E_{12}$ crossings:} Suppose $e\in E_2\setminus E(H)$ and $f\in E_{12} \cap E(R^a)$. Consider only the vertices $V_2\cup V(H)$ in $\sigma_1$. Since all vertices of $H$ appear consecutively, it is not possible for $f$ to cross $e$. Similar reason holds if $e\in E_{2} \cap E(R^a)$ and $f\in E_{12}\setminus E(H)$.\\
\textit{$E_{12}-E_{12}$ crossings:} Suppose where $e\in E_{12}\cap E(R^a)$ and $f\in E_{12} \cap E(R^b)$. Note that the only vertices of $R^a$ that $f$ can be incident on are $p_t,p_{t'}$. \textit{Case (i): $R^b$ has a start vertex other than $p_t$.} Then in $\sigma_1$, all vertices of $C^b$ appear either before $p_t$ or after $p_{t'}$ and the vertices of $R^a$ appear consecutively. (if $p_{t'}=p_1$, then then all vertices of $R^b$ appear together). Hence there is no crossing between $e$ and $f$. \textit{Case (ii): $R^b$ has $p_t$ as the start vertex} (if $p_{t'}=p_1$, then suppose $R^b$ has $p_{1}$ as the start vertex). 
Consider only the vertices $V_2\cup V(R^a)\cup V(R^b)$ in $\sigma$. The respective permutations are $(p_1,...,p_t=s_q^2,\alpha_1^b,\alpha_1^a,s_{q+1}^2,s_{q+2}^2,\dots,p_{t'}s_{q'}^2,s_{q'+1}^2,s_{q'+2}^2,\dots,s_{n_2}^2)$ and $(p_1,...,\alpha_2^b,\alpha_2^a,p_t=s_q^2,s_{q+1}^2,s_{q+2}^2,\dots,p_{t'}=s_{q'}^2,s_{q'+1}^2,s_{q'+2}^2,\dots,s_{n_2}^2)$. 
In the first permutation, crossing between $e$ and $f$ can occur only if $e$ is incident on $p_t$ and $f$ is incident on $p_{t'}$ or a vertex to its right. It can be seen that this crossing is resolved in the second permutation.\\
\par Thus, a pairwise suitable family of two permutations is constructed for a maximal 2-outerplanar graph.
\end{proof}

%----------------------------------------------------------------------
\section{Series-Parallel graphs}
\label{sec: series}
Each series-parallel graph can be $k$-outerplanar for some natural number $k$. Interestingly, circular separation dimension of series-parallel graphs is at most 2. The proof is as follows.
\begin{theorem}
If $G$ is a series-parallel graph, then $\pi^\circ(G)\leq 2$.
\end{theorem}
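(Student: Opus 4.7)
The plan is to prove a slightly strengthened statement by structural induction on the recursive construction of a $2$-terminal series-parallel graph. Recall that such a graph $G$ with source $s$ and sink $t$ is built from $K_2$ by either \emph{series composition} $G = G_1 \cdot G_2$ (where $G_1$ has terminals $(s, u)$, $G_2$ has $(u, t)$, and they share only $u$) or \emph{parallel composition} $G = G_1 \parallel G_2$ (where both $G_j$ have terminals $(s, t)$, identified). The strengthened claim is that for every such $G$ there exist two circular orderings $\sigma_1(G), \sigma_2(G)$ of $V(G)$ that separate every pair of non-adjacent edges of $G$ and moreover each is induced by a linear sequence starting at $s$ and ending at $t$ (so in the circle, $t$ is followed immediately by $s$). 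This stronger invariant is what closes the induction. I will use the routine fact that if a pair of edges is separated in a subordering then it remains separated in any extension.

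For the base case $G = K_2$, take $\sigma_1 = \sigma_2 = (s, t)$; the invariant is vacuous. For the series step, given $\sigma_i(G_1) = (s, Y_i^1, u)$ and $\sigma_i(G_2) = (u, Y_i^2, t)$ from the induction (where $Y_i^j$ denotes the internal part), set $\sigma_i(G) = (s, Y_i^1, u, Y_i^2, t)$ for $i = 1, 2$. The invariant is preserved and the restriction of $\sigma_i(G)$ to $V(G_j)$ equals $\sigma_i(G_j)$ circularly, so within-$G_j$ pairs are covered by induction. For a non-adjacent cross-pair $e \in E(G_1), f \in E(G_2)$, since $V(G_1) \cap V(G_2) = \{u\}$ and $e, f$ share no vertex, at most one of the two edges touches $u$; a short case check (on whether $e$ or $f$ is incident to $u$) shows $V(e)$ and $V(f)$ lie on opposite arcs of the circle determined by an appropriate pair of vertices, so they are separated.

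The parallel step is the main point and forces a non-symmetric construction. Given $\sigma_i(G_1) = (s, Y_i^1, t)$ and $\sigma_i(G_2) = (s, Y_i^2, t)$, define
\[
\sigma_1(G) \;=\; (s,\, Y_1^1,\, Y_1^2,\, t), \qquad \sigma_2(G) \;=\; (s,\, Y_2^2,\, Y_2^1,\, t).
\]
The order of $G_1$'s and $G_2$'s internals is swapped between the two orderings. Both orderings are $s$-to-$t$ sequences, and the restriction of $\sigma_i(G)$ to $V(G_j)$ recovers $\sigma_i(G_j)$ circularly, so within-subgraph pairs are again handled by induction. For a non-adjacent cross-pair $e \in E(G_1), f \in E(G_2)$, the trivial subcases (both edges avoid $\{s, t\}$; or one of them equals $(s, t)$; or exactly one of $e, f$ meets $\{s, t\}$ while the other avoids it) place the endpoints of one edge entirely in a single contiguous arc determined by the other edge, yielding separation in both orderings. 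The only remaining subcases are the two ``diagonal'' types $(s, a)$-$(t, b)$ and $(t, a)$-$(s, b)$ with $a \in Y^1$ and $b \in Y^2$: a direct inspection of the position sequence shows $(s, a)$-$(t, b)$ is separated in $\sigma_1(G)$ but alternates in $\sigma_2(G)$, whereas $(t, a)$-$(s, b)$ alternates in $\sigma_1(G)$ but is separated in $\sigma_2(G)$. Thus each diagonal type is covered by exactly one of the two orderings.

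The main obstacle is precisely this parallel step: a symmetric construction that used the same order of $G_1, G_2$ in both circular orderings would already fail on $C_4 = (s-a-t) \parallel (s-b-t)$, where both orderings would equal $(s, a, b, t)$ and the non-adjacent pair $(s, b)$-$(t, a)$ alternates in both. Introducing the swap in $\sigma_2(G)$, together with using the two inductive orderings of each $G_j$ in the appropriate slots, resolves this, and the remaining verification reduces to the routine case analysis sketched above.
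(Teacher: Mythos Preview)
Your proof is correct but takes a genuinely different route from the paper's. The paper works with the \emph{reduction} view of series-parallel graphs (building up from $K_2$ by subdividing edges and adding parallel edges) rather than the $2$-terminal composition view. Its argument is then a two-line induction on subdivisions: when a new degree-$2$ vertex $x$ with neighbours $a,b$ is introduced, insert $x$ immediately after $a$ in $\sigma_1$ and immediately after $b$ in $\sigma_2$; then $ax$ cannot cross anything in $\sigma_1$ and $bx$ cannot cross anything in $\sigma_2$, while all old pairs are unaffected. No strengthened invariant is needed, and the parallel step is trivial because duplicating an edge creates no new separation constraints.

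Your approach instead follows the SP decomposition tree and carries the extra invariant that both circular orderings come from $s$-to-$t$ linear sequences. This forces you to do real work in the parallel step (the swap of the two internal blocks between $\sigma_1$ and $\sigma_2$, and the diagonal case analysis), whereas the paper's hard step is the series/subdivision one, which it dispatches almost for free. What your version buys is an explicit, compositional description of the two orderings directly from the decomposition tree, which is arguably more informative and could be convenient for algorithmic or inductive extensions; the price is a longer case analysis. The paper's version buys brevity: by treating a series step as a single-vertex insertion rather than a merge of two large pieces, it sidesteps the cross-pair analysis entirely.
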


\begin{proof}
We prove the statement by induction on the number of series or parallel operations. It is easy to see that the statement holds for a series or parallel operation on a single edge. It is easy to see that a parallel operation maintains the separation dimension of any graph. Hence, the theorem has to be proved only for series operations. Suppose the statement holds for $\leq k$ operations and $G'$ is a graph obtained from $k$ operations. Then, $\pi^\circ(G')\leq 2$. Let $\{\sigma_1',\sigma_2'\}$ be a pairwise suitable family for $G'$ (if $\pi^\circ(G') = 1$, then we can assume $\sigma_1' =\sigma_2'$).
Let $G$ be a graph obtained from $G'$ through a series operation. 
Let $x$ be the new vertex added where $N(x)=\{a,b\}\subseteq V(G)$. 
We obtain  $\sigma_1$ from  $\sigma_1'$ by replacing $a$ by $a,x$,  and $\sigma_2$ from  $\sigma_2'$ by replacing $b$ by $b,x$. Thus no edge can cross $ax$ in  $\sigma_1$, and $bx$ in  $\sigma_2$  as the vertices appear consecutively in $\sigma_1$ and $\sigma_2$, respectively. Hence $\{\sigma_1,\sigma_2\}$ is a pairwise suitable family for $G$.

\end{proof}

\section{Conclusion}
In this article, we show that the circular separation dimension of a 2-outerplanar graph is exactly two.  
It is to be noted that if the 2-outerplanar embedding is given as an  input, one can construct two pairwise suitable circular permutations in polynomial time.
If our conjecture on circular separation dimension of planar graphs is not true, then  an obvious question to ask  is: \emph{what is the maximum value of $k$ for which the  circular separation dimension of a maximal $k$-outerplanar graph is exactly $2$?}  We conclude with this  open question for future research.

\section*{Acknowledgements} The authors wish to thank L. Sunil Chandran for helpful discussions.

\bibliographystyle{abbrvnat}

\end{document}